\DeclarePairedDelimiter\ceil{\lceil}{\rceil}
\def\BibTeX{{\rm B\kern-.05em{\sc i\kern-.025em b}\kern-.08em
    T\kern-.1667em\lower.7ex\hbox{E}\kern-.125emX}}
\newtheorem{theorem}{Theorem}
\newtheorem{lemma}{Lemma}
\newtheorem{corollary}{Corollary}
\newcommand{\etal}{\textit{et~al}.}
\newcommand{\secref}[1]{Section~\ref{sec: #1}}
\newcommand{\figref}[1]{Fig.~\ref{fig: #1}}
\newcommand{\tabref}[1]{Table~\ref{tab: #1}}
\newcommand{\defref}[1]{Definition~\ref{def: #1}}
\newcommand{\alglineref}[1]{Line~\ref{line: #1}}
\newcommand{\ith}{$i^\textit{th} \ $}
\newcommand{\RomanNumeralCaps}[1]
{\MakeUppercase{\romannumeral #1}}
\newtheorem{definition}{Definition}[section]
\theoremstyle{definition}
\newtheorem{Example}{Example}[]
\newtheorem{observation}{Observation}
\begin{document}
\newcommand{\glp}{\boldsymbol{GLP}}
\newcommand{\elp}{\boldsymbol{ELP}}
\newcommand{\elpmore}[2]{$\boldsymbol{ELP}_{E_#1}(#2)$}

\newcommand{\glpeval}{$$\hat{\mathcal{F}}_{\boldsymbol{GLP}}$$}
\newcommand{\glpevalOneindex}[1]{$\hat{\mathcal{F}}_{\boldsymbol{GLP}_{#1}}$}

\newcommand{\Pite}{$\textbf{P}_{ite}$}
\def \ie{i.e.,~}
\def \eg{e.g.,~}

\title{Optimizing Logical Execution Time Model for Both Determinism and Low Latency}

\author{
\IEEEauthorblockN
{
Sen Wang$^1$,
Dong Li$^1$,
Ashrarul H. Sifat$^{1}$,
Shao-Yu Huang$^2$,
Xuanliang Deng$^{1}$, \\
Changhee Jung$^2$, Ryan Williams$^1$, Haibo Zeng$^1$\\
}
\IEEEauthorblockA{{$^1$Virginia Tech, $^2$Purdue University}}
Email: \{swang666, dongli, ashrar7, xuanliang\}@vt.edu,
\{huan1464, chjung\}@purdue.edu, \{rywilli1, hbzeng\}@vt.edu, 
}
\maketitle

\thispagestyle{plain}
\pagestyle{plain}

\begin{abstract}
The Logical Execution Time (LET) programming model has recently received considerable attention, particularly because of its timing and dataflow determinism. In LET, task computation appears always to take the same amount of time (called the task's LET interval), and the task reads (resp. writes) at the beginning (resp. end) of the interval. 
Compared to other communication mechanisms, such as implicit communication and Dynamic Buffer Protocol (DBP), LET performs worse on many metrics, such as end-to-end latency (including reaction time and data age) and time disparity jitter. Compared with the default LET setting, the flexible LET (fLET) model shrinks the LET interval while still guaranteeing schedulability by introducing the virtual offset to defer the read operation and using the virtual deadline to move up the write operation. Therefore, fLET has the potential to significantly improve the end-to-end timing performance while keeping the benefits of deterministic behavior on timing and dataflow.  

To fully realize the potential of fLET, we consider the problem of optimizing the assignments of its virtual offsets and deadlines. We propose new abstractions to describe the task communication pattern and new optimization algorithms to explore the solution space efficiently. 
The algorithms leverage the linearizability of communication patterns and utilize symbolic operations to achieve efficient optimization while providing a theoretical guarantee. 
The framework supports optimizing multiple performance metrics, and guarantees bounded suboptimality when optimizing end-to-end latency.
Experimental results show that our optimization algorithms improve upon the default LET and its existing extensions and significantly outperform implicit communication and DBP in terms of various metrics, such as end-to-end latency, time disparity, and its jitter.
\end{abstract}

\section{Introduction}

Logical Execution Time (LET) is a programming model suitable for developing real-time control applications~\cite{Henzinger2001GiottoAT}. The basic idea behind LET is to abstract away the scheduling and implementation details of software tasks to facilitate model-level, platform-independent testing and verification. 
This is done by defining a time interval for each task, called LET interval, such that the task computation always appears to take the same amount of time as the LET interval, regardless of how long it actually takes to finish. 
In addition, the task communication happens at the boundary of the LET interval: the task always reads at the beginning of the interval and writes at its end.
In most literatures~\cite{Henzinger2001GiottoAT, Tang2023ComparingCP, Hamann2017CommunicationCD, Pazzaglia2023OptimizingIC, Pazzaglia2021OptimalMA, Pazzaglia2019OptimizingTF, Biondi2018AchievingPM, Kordon2020EvaluationOT}, the LET interval spans from the task's release to its deadline (default LET), though shorter LET intervals (flexible LET) are also possible.

The excellent properties of LET have drawn strong interest from academia as well as the automotive industry~\cite{Pazzaglia2021OptimalMA, Pazzaglia2023OptimizingIC, Biondi2018AchievingPM, Pazzaglia2019OptimizingTF, Shrivastava2021IntroductionTT}. In particular, the time determinism and dataflow determinism of LET have made it an appealing solution as a communication mechanism, especially for multicore platforms~\cite{ZiegenbeinH15}, as these properties simplify the analysis of systems' temporal behavior, stabilize end-to-end latency variance~\cite{Martinez2018AnalyticalCO}, and make the system's temporal behavior composable~\cite{Hamann2017CommunicationCD}. Here, time determinism refers to the fact that communication happens only at predefined time instants. Dataflow determinism refers to the property that in a communication link, a consumer job always reads the output value from the same producer job. 

However, as a communication mechanism, LET also introduces significant drawbacks regarding end-to-end latency metrics, including reaction time and data age. These metrics are important for real-time systems' safety. For example, it is usually desirable that the real-time computing system reacts to sensor readings or external events as quickly as possible, imposing a system design that minimizes the reaction time~\cite{Percept21_RTSS}. The default LET only makes the output data available at the job deadline, even if the data may be ready much earlier. 
Hence, compared with other popular communication mechanisms such as implicit communication and Dynamic Buffer Protocol, the default LET is proven to have a longer data age or reaction time~\cite{Tang2023ComparingCP}. 

The flexible LET (fLET) model~\cite{kirsch2006evolution, Biondi2018AchievingPM, Bini2023ZeroJitterCO, autosar_timing_extensions22} provides a potential solution by adjusting the LET interval. 
The difference between fLET and the default LET is that the release time of a task is delayed by a \textit{virtual offset}, and the reading happens deterministically at this deferred release time (Some literatures~\cite{kirsch2006evolution} only delay the reading time). Similarly, a task's deadline and the write operation are brought forward to an earlier \textit{virtual deadline}. We provide more examples of fLET later in the paper (\eg Fig.~\ref{Fig_LET_fLET_IC_Example}).
The flexibility of fLET provides big potential for performance improvements (as shown in our experiments).
However, how to optimize both the virtual offset and virtual deadline while offering a theoretical guarantee is missing in the literature.

\noindent$\textbf{Contributions.}$ 
\begin{itemize}[leftmargin=*]
   \item Propose novel optimization algorithms to optimize \textit{both} the LET intervals' start and finish times.
   \item Perform symbolic operations of a newly introduced concept, \textit{communication pattern}, which compactly models tasks' possible reading/writing relationships.
   \item Establish suboptimality bounds when minimizing the data age or reaction time. To the best of our knowledge, this is the first work that provides the theoretical guarantee for fLET while maintaining fast runtime speed.
   \item Support minimizing various latency metrics for fLET. To the best of our knowledge, this is the first optimization algorithm that supports optimizing time disparity and jitter.
   \item To the best of our knowledge, this is the first work that shows LET after optimization outperforms alternative communication protocols (implicit communication and DBP) across multiple metrics, such as end-to-end latency, time disparity, and jitter. Our findings enhance fLET's competitiveness in broader scenarios.
\end{itemize}

\noindent \textbf{Paper Organization.} 
\secref{relatedwork} summarizes the related work. 
\secref{system_model} introduces the system model and reviews the latency metrics and task communication mechanisms. Section~\ref{section_fLET} describes the flexible LET (fLET) model. Section~\ref{section_commu_pattern} defines the communication pattern, a new abstraction for task communication that will be leveraged in our optimization algorithms. Section~\ref{section_optimization} gives a general optimization framework for optimizing any metric, while Section~\ref{section_dart_optimization} proposes new theorems to optimize data age and reaction time efficiently. Section~\ref{section_application} presents generalizations, limitations, and possible solutions. We present the experimental results in Section~\ref{experiment_section} and conclude the paper in Section~\ref{conclusion_section}.

\section{Related Work}
\label{sec: relatedwork}

The RTSS2021 industry challenge~\cite{Percept21_RTSS} comprehensively summarizes essential latency metrics: data age, reaction time, and time disparity.
Among them, data age and reaction time (DART) have been studied extensively for analysis and optimization~\cite{Feiertag2008ACF, Abdullah2019WorstCaseCR, Gnzel2021TimingAO, Drr2019EndtoEndTA, Schlatow2018DataAgeAA, Kloda2018LatencyAF, Verucchi2020LatencyAwareGO, Klaus2021ConstrainedDW, Gnzel2023OnTE}. 
As for time disparity, Li~\etal~\cite{Li2022WorstCaseTD} developed analysis methods for the Robotic Operating System~\cite{Li2022WorstCaseTD}, Jiang~\etal~\cite{Jiang2023AnalysisAO} proposed analysis and buffer size design to reduce the worst-case time disparity. However, most of these studies focus on communication mechanisms other than LET or its variants. 

The Logical Execution Time (LET) programming model, introduced by Henzinger \etal~\cite{Henzinger2001GiottoAT} with the programming language Giotto, has gained attention from the automotive industry~\cite{Biondi2018AchievingPM, Ernst2018TheLE, Hamann2017CommunicationCD, Pazzaglia2023OptimizingIC, Martinez2018AnalyticalCO, Gemlau21TCPS}. It is also integrated into the automotive software architecture standard AUTOSAR~\cite{Tang2023ComparingCP, autosar_timing_extensions22}. 
The popularity in the automotive domain inspires related research in processor assignments~\cite{Pazzaglia2019OptimizingTF, Pazzaglia2021OptimalMA}, end-to-end latency analysis~\cite{Martinez2018AnalyticalCO, Kordon2020EvaluationOT, Martinez2020EndtoendLC, Becker2017EndtoendTA}, and optimization~\cite{Martinez2018AnalyticalCO, Lee2022GeneralizingLE, Bradatsch2016DataAD}. 
However, most research on LET assumes the default LET model (LET intervals equal periods), which performs worse in end-to-end latency metrics than implicit communication and DBP~\cite{Tang2023ComparingCP} and may suffer from larger latency jitter.

Variants of the flexible LET model and heuristics algorithms have been considered in the literature to reduce the end-to-end latency. Martinez~\etal~\cite{Martinez2018AnalyticalCO} proposed to assign an offset to tasks upon their first release while keeping the LET intervals the same as the default LET. 
Bradatsch \etal~\cite{Bradatsch2016DataAD} proposed using the worst-case response time as the length of the LET interval, but the reading time remains the same as the default LET model (\ie no offset).
Maia~\etal~\cite{Maia2023ReducingEL} set the LET interval of each task from the smallest relative start time to the largest relative finish time based on schedules. 
Apart from the fLET model modifications, heuristic optimization algorithms have also been proposed to reduce end-to-end latency~\cite{Martinez2018AnalyticalCO, Maia2023ReducingEL}. 
However, these algorithms may suffer from scalability issues or cannot provide an optimality guarantee for fLET.

Offset assignments could reduce the jitter of many metrics, such as response time~\cite{Tindell1994ADDINGTT, Palencia1998SchedulabilityAF} and end-to-end latency~\cite{Martinez2018AnalyticalCO}. Bini~\etal~\cite{Bini2023ZeroJitterCO} proposed to utilize ring algebra to eliminate the jitter of end-to-end latency in LET models. However, the jitter of the time disparity metric is rarely studied.

Compared to previous LET-related studies, this paper \emph{addresses a more general optimization problem in terms of variables or objective functions. It also introduces novel optimization techniques with improved performance and stronger theoretical guarantees}.

\section{System Model and Definitions}
\label{sec: system_model}
This paper uses bold fonts to represent a vector or a set and light characters for scalars. The \ith element of a set $\boldsymbol{S}$ is denoted as $\boldsymbol{S}(i)$. 
We may use sub- and super-scripts to distinguish notations, such as $q_j^n$.


\subsection{System Model}
This paper considers a task set $\boldsymbol{\tau}$ of periodic tasks (Generalization to sporadic tasks will be discussed later). Each task $\tau_i$ is described by a tuple $(C_i, T_i, D_i^{org})$, which denotes the worst-case execution time, period, and relative deadline, respectively. 
We assume $D_i^{org} \leq T_i$.
The least common multiple of all the task periods is hyper-period $H$. 
Every $T_i$ time units, $\tau_i$ releases a job for execution. The $q_i$-th released job of the task $\tau_i$ is usually denoted as $J_{i,q_i}$. 
We allow the job index $q_i$ to be negative numbers. For example, $J_{i, -1}$ is released one period earlier than $J_{i, 0}$. To simplify notations, we assume the $0^{th}$ jobs of all the tasks become ready at time 0. However, our optimization algorithms can optimize LET intervals if there is a known release offset.

We assume some worst-case response time analyses (RTA) are available. The system is schedulable if the RTA is no larger than the relative deadline.
Ideally, the worst-case RTA should have no dependency or linear dependency on other tasks' release time and deadline (an example is provided in Section~\ref{experiment_section}). 
Otherwise, please refer to Section~\ref{section_more_application_schedulability}.

The input of a task $\tau_j$ (or a job $J_{j,q_j}$) could depend on the output of a task $\tau_i$ (or a job $J_{i,q_i}$), which we denote as $\tau_i \rightarrow \tau_j$ (or $J_{i,q_i} \rightarrow J_{j,q_j}$). 
The dependency relationships of all the tasks in $\boldsymbol{\tau}$ can be represented as a directed acyclic graph (DAG) $\boldsymbol{G}=(\boldsymbol{\tau}, \boldsymbol{E})$, where $\boldsymbol{E}$ denotes all the edges in the graph. 
The DAG is not assumed to be fully connected for generality.
A cause-effect chain $\mathcal{C}$ is a list of tasks $\{\tau_0$, $\tau_1$ ... $\tau_n \}$ where $\tau_i \rightarrow \tau_{i+1}$. 
These tasks are typically indexed based on their causal relationships: $\tau_0$ is usually the source task that depends on no other tasks in $\mathcal{C}$, $\tau_n$ is the sink task that no other tasks in $\mathcal{C}$ depend on. 
Multiple cause-effect chains may share some tasks in a graph $\boldsymbol{G}$. All the cause-effect chains are denoted as $\boldsymbol{\mathcal{C}}$.
A job $J_{i, q_i}$'s reading (writing) time is denoted as $re_{J_{i,q_i}}$ ($wr_{J_{i,q_i}}$).

\subsection{Data Age and Reaction Time}
Data age and reaction time (DART) are among the most commonly used metrics to measure end-to-end latency.
Given a cause-effect chain $\mathcal{C}=\tau_0 \rightarrow... \rightarrow \tau_n$, data age measures the longest time a sensor event influences a computation system. In contrast, reaction time measures the maximum delay between the occurrence of the first sensor event and the last event of the chain that depends on the sensor event. 
Next, we review the DART analysis following G{\"u}nzel~\etal~\cite{Gnzel2021TimingAO}:
\begin{definition}[Job chain~\cite{Gnzel2021TimingAO}]
    A job chain $\mathcal{C}^J$ of a cause effect chain $\mathcal{C}$ is a sequence of jobs $(J_{0,q_0},...,J_{n,q_n})$ where the reading data of job $J_{i+1, q_{i+1}}$ is generated by $J_{i, q_{i}}$.
\end{definition}
\begin{definition}[Length of a job chain, $\text{Len}(\mathcal{C}^J)$]
\label{def_length_job_chain}
    The length of a job chain $\mathcal{C}^J=(J_{0,q_0},..., J_{n,q_n})$ is $wr_{J_{n,q_n}}-re_{J_{0,q_0}}$.
\end{definition}

When the reader task and the writer task have different periods, issues of under-sampling may arise, where a writer's data cannot reach any reader due to being overwritten by subsequent writers. Conversely, over-sampling may occur, wherein a writer's data is read by multiple readers simultaneously~\cite{Abdullah2019WorstCaseCR}. The following concepts aid in providing a more precise description of these communication relationships.

\begin{definition}[Last-reading job]
   \label{def: last_reading_job}
    Consider two tasks $\tau_i \rightarrow \tau_j$. For any jobs $J_{j, q_j}$, it has a unique last-reading job $J_{i, \overleftarrow{q_j}}$ that satisfies the following properties:
    \begin{equation}         
    wr_{J_{i,\overleftarrow{q_j}}} \leq re_{J_{j,q_j}} < wr_{J_{i,\overleftarrow{q_j}+1}}
        \label{last_reading_job}
    \end{equation}
\end{definition}
For example, in Fig.~\ref{fig_stanLET_Example}, $J_{1,0}$ is $J_{2,3}$'s last-reading job.

\begin{definition}[Immediate backward job chain, \cite{Gnzel2021TimingAO}]
    An immediate backward job chain is a job chain $(J_{0,q_0},...,J_{n,q_n})$ where each $J_{i, q_{i}}$ is the last-reading job of $J_{i+1, q_{i+1}}$.
\end{definition}
\begin{definition}[First-reacting job]
   \label{def: first_reacting_job}
    Consider two tasks $\tau_i \rightarrow \tau_j$. For any jobs $J_{i, q_i}$, it has a unique first-reacting job $J_{j, \overrightarrow{q_i}}$ which satisfies the following properties:
    \begin{equation}
    re_{J_{j,\overrightarrow{q_i}-1}} < wr_{J_{i,q_i}} \leq re_{J_{j,\overrightarrow{q_i}}}
        \label{first_reacting_job}
    \end{equation}
\end{definition}
For example, in Fig.~\ref{fig_stanLET_Example}, $J_{1,1}$ is $J_{0,0}$'s first-reacting job.

\begin{definition}[Immediate forward job chain, \cite{Gnzel2021TimingAO}]
    An immediate forward job chain is a job chain $(J_{0,q_0},...,J_{n,q_n})$ where each $J_{i+1, q_{i+1}}$ is the first-reacting job of the job $J_{i, q_{i}}$.
\end{definition}
In this paper, We use upper arrows to distinguish one job's last-reading or first-reacting job as in \defref{last_reading_job} and \defref{first_reacting_job}.
Following~\cite{Gnzel2021TimingAO}, the worst-case data age (reaction time) of a cause-effect chain is the length of its longest immediate backward (forward) job chains.

\begin{figure}[t!]
\centering
\begin{subfigure}[b]{0.9\columnwidth} 
    \includegraphics[width=\textwidth]{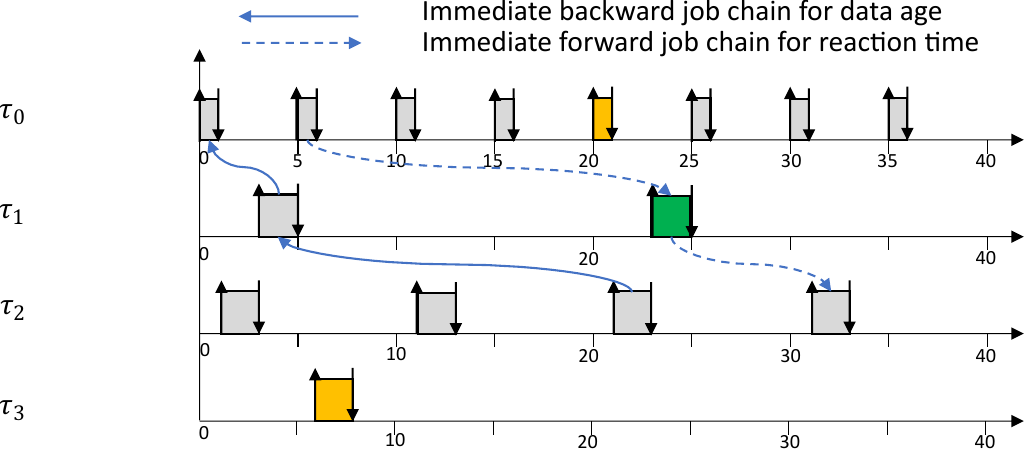}
    \caption{Implicit communication: RT=28, DA=23, TD=33, Jitter=20}
    \label{fig: implicit_example}
\end{subfigure}
\begin{subfigure}[b]{0.9\columnwidth}
    \includegraphics[width=\textwidth]{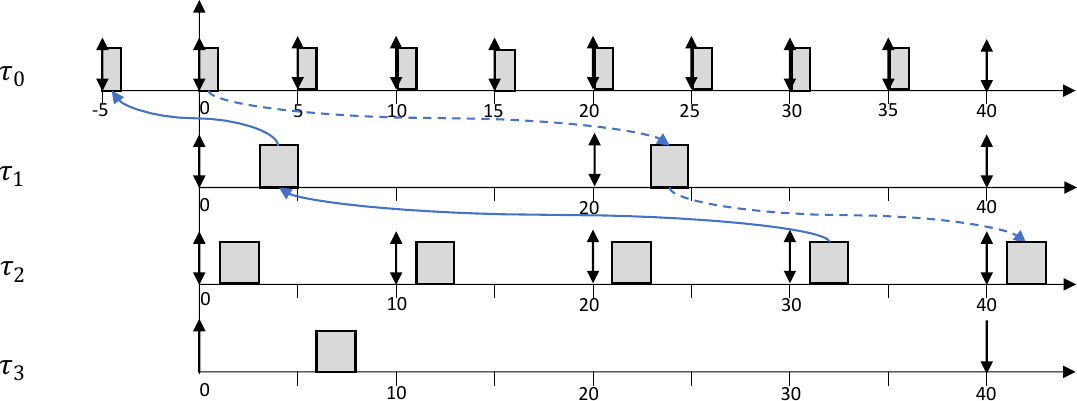}
    \caption{Default LET: RT=50, DA=45, TD=20, Jitter=20  }
    \label{fig_stanLET_Example}
\end{subfigure}
\begin{subfigure}[b]{0.9\columnwidth}
\includegraphics[width=\textwidth]{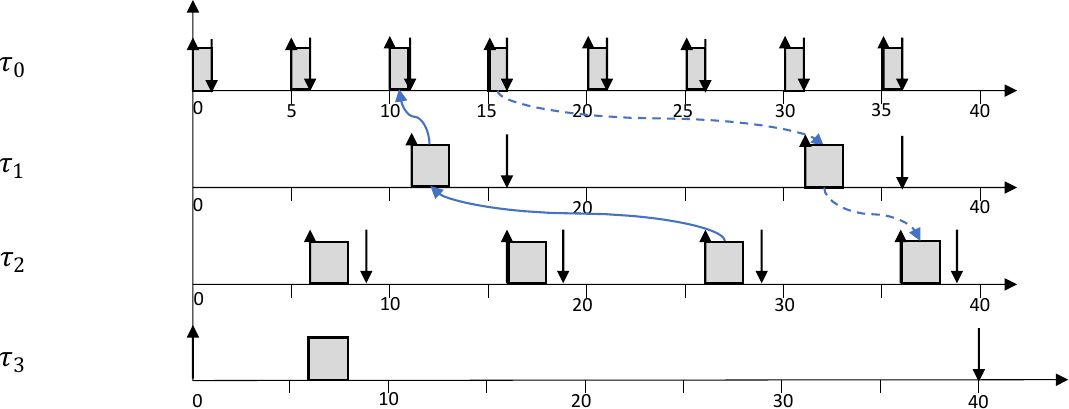}
    \caption{fLET-Optimize-DA, RT=\textbf{24}, DA=\textbf{19}, TD=31, Jitter=20}
    \label{fig_fLET_Example}
\end{subfigure}
\begin{subfigure}[b]{0.9\columnwidth}
\includegraphics[width=\textwidth]{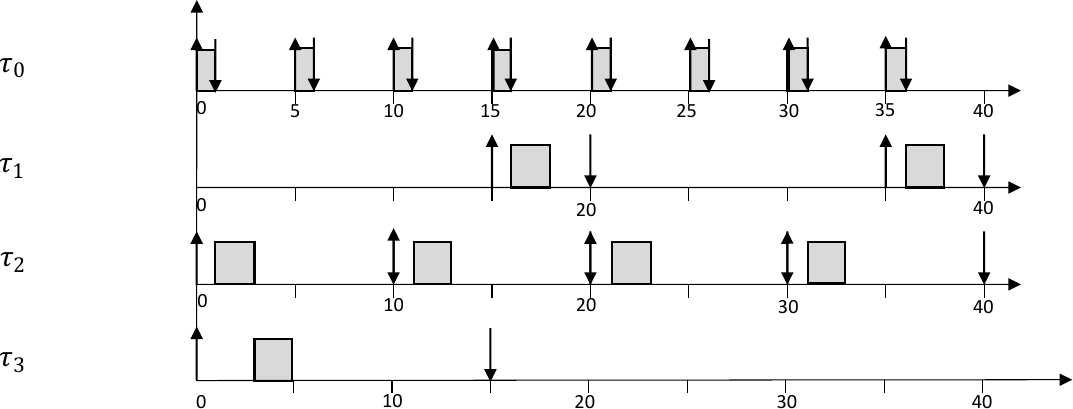}
    \caption{fLET-Optimize-TD-Jit, RT=35, DA=30, TD=\textbf{16}, Jitter=\textbf{12} }
    \label{fig_fLET_Example_sf_jitter}
\end{subfigure}
\caption{
The schedules of different communication protocols in Example~\ref{Example_setup}.
The upward and downward arrows represent job reading and writing times, respectively. 
Solid leftward arrows connect the longest immediate backward job chains, and dashed rightward arrows connect the longest immediate forward job chains.
It is worth highlighting that fLET demonstrates superior performance compared to alternative methods after optimization. Figures \ref{fig_fLET_Example} and \ref{fig_fLET_Example_sf_jitter} serve merely as illustrations of the optimization results for different metrics. Importantly, our optimization approach supports the simultaneous optimization of multiple metrics.
}
\label{Fig_LET_fLET_IC_Example}
\end{figure}

\begin{Example}
\label{Example_setup}
Consider a DAG with four tasks: $\{\tau_0, \tau_1, \tau_2, \tau_3\}$ as shown in \figref{example_dag_fig}. The four tasks are scheduled on one CPU with Rate Monotonic preemptively.
The objective is minimizing the worst-case data age of the cause-effect chain $\mathcal{C}_0:\{\tau_0 \rightarrow \tau_1 \rightarrow \tau_2\}$ and $\mathcal{C}_1:\{\tau_3 \rightarrow \tau_1 \rightarrow \tau_2\}$.

\begin{figure}[t!]
\centering
\includegraphics[width=1.0\columnwidth]{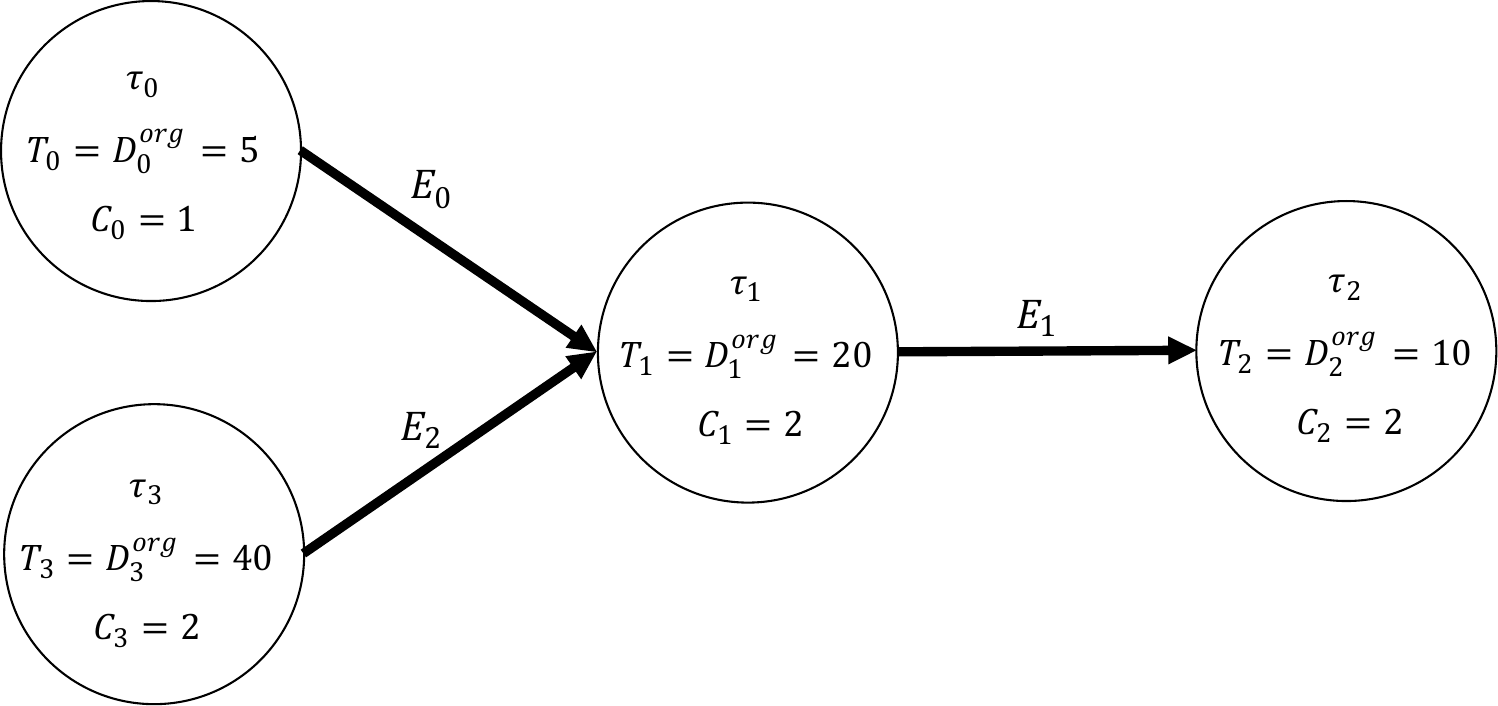}
\caption{Example DAG with two cause-effect chains.}
\label{fig: example_dag_fig}
\end{figure}

Consider the cause-effect chain $\mathcal{C}_0$. 
In Fig.~\ref{fig_stanLET_Example},
$J_{0,0} \rightarrow$ $J_{1,1}$ $\rightarrow J_{2,4}$ formulate an immediate forward job chain (length is 50, notice $wr_{J_{2,4}}=50$), $J_{0,-1}\rightarrow J_{1,0} \rightarrow J_{2,3}$ formulate an immediate backward job chain (length is 45). 
\end{Example}

There are alternative definitions of data age and reaction time~\cite{Gnzel2021TimingAO, Gunzel23TECS}.
Our optimization algorithm is compatible with them as they have a similar analysis.

\subsection{Time disparity}
\label{Section_SF}
In many situations, such as autonomous driving, a task's input could depend on multiple source tasks (the sink and source tasks constitute a \textit{merge}). In this case, the input data must be collected at approximately the same time to be useful.
\begin{definition}[Time disparity, \cite{Percept21_RTSS}]
\label{def_sf}
    Consider a task $\tau_j$ which reads input from several source tasks $\boldsymbol{\tau}^s$ ($\tau_j$ and $\boldsymbol{\tau}^s$ constitute a ``merge''). For each job, $J_{j,q_j}$, denote its last-reading jobs from $\boldsymbol{\tau}^{s}$ as $\boldsymbol{J}^{lr,s}$.
    The time disparity of $J_{j,q_j}$ is defined as the difference between the earliest and the latest writing time of the jobs in $\boldsymbol{J}^{lr,s}$:
    \begin{equation}
      TD(J_{j,q_j}) =  \max_{J \in \boldsymbol{J}^{lr,s} }wr_J - 
        \min_{J \in \boldsymbol{J}^{lr,s}} wr_J
        \label{time_disparity_def}
    \end{equation}
\end{definition}

\begin{Example}
    In Example~\ref{Example_setup}, $\tau_1$, $\tau_0$ and $\tau_3$ formulate a merge. 
    In \figref{implicit_example}, the job {$J_{1,1}$} (highlighted in green) reads data from {$J_{0,4}$} (yellow) and {$J_{3,0}$} (yellow). The time disparity of $J_{1,1}$ is $21-8=13$; Similarly, $J_{1,0}$'s time disparity is $1-(-32)=33$. $\tau_1$'s worst-case time disparity is $33$; its jitter is $33-13=20$.
\end{Example}

\subsection{Task Communication Mechanisms}
In this section, we briefly review several popular communication mechanisms following~\cite{Hamann2017CommunicationCD, Tang2023ComparingCP}:
\begin{itemize}[leftmargin=*]
    \item Direct Communication: allows unrestricted access to shared variables, enabling tasks to read or write data whenever needed. It may suffer from data inconsistency issues~\cite{Hamann2017CommunicationCD}.
    \item Implicit communication:
    Tasks read input data when they begin execution and write output data when they finish.
   This approach depends on task schedules, making it more complex to analyze and optimize system behavior, such as end-to-end latency.
\item Default Logical Execution Time: The timing of data read and write is deterministic if the system is schedulable, independent of task schedules. 
Typically,  the reading and writing times are defined as follows:
\begin{equation}
    \forall J_{i,q_i},\ re_{J_{i,q_i}}= q_iT_i, \ wr_{J_{i,q_i}}= q_iT_i + D_i^{org}
    \label{eq_read_write_standard_let}
\end{equation}
    \item Dynamic Buffer Protocol (DBP): 
    A non-blocking buffering protocol that can preserve synchronous semantics under preemptive scheduling \cite{sofronis2006memory}. Instead of restricting data reads and writes at the beginning or end of task execution, DBP allows arbitrary data access through task execution without compromising data consistency by managing multiple buffer copies for the same data. However, there may be additional memory and buffer management overhead. 
\end{itemize}
Under the implicit communication protocol, the length of a cause-effect chain, measured by the maximum data age or reaction time, is no longer than under the DBP communication protocol, which is, in turn, no longer than under the default LET communication protocol~\cite{Tang2023ComparingCP}. 

\section{ \texorpdfstring{The Flexible LET Model and \\ Problem Description}{Problem Description}}
\label{section_fLET}

This section reviews the flexible LET model~\cite{kirsch2006evolution, Biondi2018AchievingPM, autosar_timing_extensions22} where tasks can read input and write output at times different from those specified in equation~\eqref{eq_read_write_standard_let}:
\begin{definition}[Virtual offset, $O_i$]
\label{def_virtual_offset}
    For each job $J_{i,q_i}$ of a periodic task $\tau_i$, both its release time and the time it reads its input, $re_{J_{i,q_i}}$, is delayed from $q_iT_i$ to $ O_i + q_iT_i$, where $ O_i \geq 0$.
\end{definition}
\begin{definition}[Virtual deadline, $D_i$]
\label{def_virtual_deadline}
     For each job $J_{i,q_i}$ of a periodic task $\tau_i$, both its deadline and the time to write its output, $wr_{J_{i,q_i}}$, is moved forward from $q_iT_i+D_i^{org}$ to $q_iT_i+D_i$, where $O_i \leq D_i \leq D_i^{org}$.
\end{definition}
In the flexible LET model, all jobs of the same task share the same virtual offset and virtual deadline, ensuring deterministic reading and writing times.

\subsection{fLET vs Alternative Communication Mechanism}
Compared to alternative methods, such as default LET and implicit communication (both are used in AUTOSAR~\cite{Tang2023ComparingCP}), fLET shows significant potential in optimizing various performance metrics while still preserving its deterministic advantages. However, realizing these benefits of the flexible LET model depends on finding an optimal set of virtual offsets and virtual deadlines, which is the central focus of this paper.

\subsection{fLET Optimization Problem Descriptions}
This paper aims to find virtual offsets and virtual deadlines to minimize the worst-case data age (or reaction time):
\begin{equation}
\label{obj_overall}
    \min_{\boldsymbol{O},\boldsymbol{D}} \sum_{\mathcal{C} \in \boldsymbol{\mathcal{C}}} \max \boldsymbol{\mathcal{F}}_{\mathcal{C}}(\boldsymbol{O},\boldsymbol{D})
\end{equation}
where $\boldsymbol{O}$ and $\boldsymbol{D}$ denote the virtual offset and virtual deadline for all the tasks in the task set $\boldsymbol{\tau}$; $\boldsymbol{\mathcal{F}}_{\mathcal{C}}$ denotes the length of all the immediate forward or backward job chains. 

Another important metric considered in this paper is the worst-case time disparity and jitter:
\begin{align}
    \min_{\boldsymbol{O},\boldsymbol{D}} \sum_{\mathcal{M} \in \boldsymbol{\mathcal{M}}} 
    & \max \boldsymbol{\mathcal{F}}_{\mathcal{M}}(\boldsymbol{O},\boldsymbol{D}) + \nonumber \\ 
    &  \omega(\max \boldsymbol{\mathcal{F}}_{\mathcal{M}}(\boldsymbol{O},\boldsymbol{D}) - \min \boldsymbol{\mathcal{F}}_{\mathcal{M}}(\boldsymbol{O},\boldsymbol{D}))    
\label{obj_overall_sf}
\end{align}
where $\mathcal{M}$ denotes a ``merge'' that includes a sink task and multiple source tasks that the sink task depends on, $\boldsymbol{\mathcal{F}}_{\mathcal{M}}$ follows Definition~\ref{def_sf}, $\omega$ is the weight of the jitter term. 

Both the optimization problems~\eqref{obj_overall} and~\eqref{obj_overall_sf} are subject to the schedulability constraints:
\begin{equation}
\label{schedulability_test_fLET}
    \forall \tau_i \in \boldsymbol{\tau}, 0 \leq O_i, \ O_i+R_i \leq D_i,\ D_i \leq D_i^{org}
\end{equation}
where $R_i$ denotes the worst-case response time for task $\tau_i$. 

\section{Communication Patterns}
\label{section_commu_pattern}

\begin{table*}[]
\caption{Edge/Graph last-reading Patterns and operations. (First-reacting patterns are defined similarly)}
\label{tab: symbol operations}
\resizebox{\textwidth}{!}{%
\begin{tabular}{@{}llllllll@{}}
\toprule
Symbol &
  Definition &
  \begin{tabular}[c]{@{}l@{}}Mathematical\\ Description\end{tabular} &
  \begin{tabular}[c]{@{}l@{}}Feasibility\\ Check\end{tabular} &
  Evaluation &
  \multicolumn{3}{c}{Other Operations} \\ \midrule
ELP &
  \begin{tabular}[c]{@{}l@{}}Job map,\\ \defref{job map}\end{tabular} &
  \begin{tabular}[c]{@{}l@{}}2 Linear Inequalities\\ \defref{elp inequalities}\end{tabular} &
  \multirow{2}{*}{\begin{tabular}[c]{@{}l@{}}Linear inequalities\\  feasibility check\\ \defref{feasibility}\end{tabular}} &
  N/A &
  \begin{tabular}[c]{@{}l@{}}Add last-reading job pairs,\\ \defref{elp inequalities}\end{tabular} &
  \begin{tabular}[c]{@{}l@{}}ELP Comparison,\\ \defref{elp comp}\end{tabular} &
  N/A \\ \cmidrule(r){1-3} \cmidrule(l){5-8} 
(partial) GLP &
  \begin{tabular}[c]{@{}l@{}}ELP set, \\ \defref{glp}, \defref{partial glp}\end{tabular} &
  \begin{tabular}[c]{@{}l@{}}Multiple Linear \\ Inequalities\end{tabular} &
   &
  \defref{glp evaluate} &
  \begin{tabular}[c]{@{}l@{}}Add ELP,\\ \defref{glp}\end{tabular} &
  \begin{tabular}[c]{@{}l@{}}GLP Comparison,\\ \defref{glp comp}\end{tabular} &
  \begin{tabular}[c]{@{}l@{}}GLP Contain,\\ \defref{contain}\end{tabular} \\ \bottomrule
\end{tabular}%
}
\end{table*}

\subsection{Motivation}
Directly solving the problem~\eqref{obj_overall} and \eqref{obj_overall_sf} is challenging due to the nonlinear and non-continuous nature of these problems, which require iterating through all virtual offset and virtual deadline combinations to find the optimal solution. 
Therefore, this section introduces a new concept, \textit{communication pattern}, which succinctly describes the communication among tasks. 
We then reformulate the problem~\eqref{obj_overall} and~\eqref{obj_overall_sf} to find the optimal communication pattern. 
Operations related to the communication patterns are summarized in \tabref{symbol operations}.

Searching for the communication patterns is easier than value combinations because they have a much smaller solution space. 
Furthermore, efficient feasibility analysis and symbolic operations are available to reduce unnecessary computation substantially.
This method improves both performance and efficiency than the state-of-art~\cite{Martinez2018AnalyticalCO}.

\subsection{Edge Communication Patterns (ECP)}
In the flexible LET model, all the jobs of the same task have the same virtual offset and virtual deadline. 
Therefore, for any edge $E= \tau_i \rightarrow \tau_j$, it is sufficient to focus on jobs within a super-period $H^s$ (the least common multiple of $T_i$ and $T_j$, \cite{Verucchi2020LatencyAwareGO}) to describe the possible reading/writing relationships. 

\begin{definition}[Job map]
\label{def: job map}
    A job map is a mapping from one job $J_{i, q_i}$ in a set to another job $J_{j,q_j}$ in another set.
\end{definition}

\begin{definition}[Edge last-reading pattern, $ELP$]
    Consider an edge $E= \tau_i \rightarrow \tau_j$. For any job $J_{j, q_j}$ where $0 \leq q_j < H^s/T_j \}$, the edge last-reading pattern $ELP$ is a job map which specifies which job of $\tau_i$ is $J_{j, q_j}$'s last-reading job.
\end{definition}
\begin{definition}[Edge first-reacting pattern, $EFP$]
    Consider an edge $E= \tau_i \rightarrow \tau_j$. For any job $J_{i, q_i}$ where $ 0 \leq q_i < H^s/T_i \}$, the edge first-reacting pattern $EFP$ is a job map which specifies which job of $\tau_j$ is $J_{i, q_i}$'s first-reacting job.
\end{definition}

An edge $E$ could have multiple ELPs or EFPs. We use $\boldsymbol{ELP}_{E}(i)$ ($\boldsymbol{EFP}_{E}(i)$) to denote the \ith ELP (EFP) of $E$. Both ELP and EFP are considered ECP.

Consider an edge $E= \tau_i \rightarrow \tau_j$, a job $J_{j, q_j}$ and its last-reading job $J_{i, \overleftarrow{q_j}}$, we can use \defref{last_reading_job} to get:
\begin{equation}
\label{last_reading_job_ineq}
    D_i+ \overleftarrow{q_j} T_i \leq O_j+q_jT_j <D_i+ (\overleftarrow{q_j}+1) T_i 
\end{equation}
Therefore, ELPs can be described by linear inequalities:
\begin{definition}[ELP inequalities]
\label{def: elp inequalities}
    Given an edge $E= \tau_i \rightarrow \tau_j$ and an ELP, the ELP inequalities are the intersection of the linear inequalities (\defref{last_reading_job}) of all the jobs $J_{j, q_j} \in \boldsymbol{J}_j =\{ {J_{j, q_j} | 0 \leq q_j < H^s/T_j} \}$ and their last-reading jobs:
    \begin{equation}
    \mathcal{I}_{ELP} = \mathcal{I}^{LR}_0 \cap  ... \ \mathcal{I}_{q_j}^{LR} ...   \cap  \mathcal{I}^{LR}_{H^s/T_j-1}
    \label{eq_ecp_fr}
\end{equation}
\end{definition}
\noindent where $\mathcal{I}^{LR}_{q_j}$ is the linear inequality~\eqref{last_reading_job_ineq} between $J_{j, q_j}$ and its last-reading job specified by the ELP.
The symbol $ \cap $ denotes that the two linear inequalities must be \textit{both} satisfied. 
Notice that Equation~\eqref{eq_ecp_fr} can be merged into two linear inequalities since all the $\mathcal{I}_{q_j}^{LR}$ depends on the same two variables (\ie $O_j$ and $D_i$).
Similarly, EFP inequalities can be obtained based on first-reacting jobs.

\begin{Example}
    Consider the edge $E_1= \tau_1 \rightarrow \tau_2$ in Example~\ref{Example_setup}. Suppose $J_{2,0}$ and $J_{2,1}$'s last-reading jobs are $J_{1,-1}$ and $J_{1,0}$, respectively, then the linear inequalities associated with the two jobs are
    $D_1-20 \leq O_2 <D_1+0$ and $D_1 \leq O_2 + 10 <D_1+20$.
    Since these inequalities have to hold together, we can merge them into one set of inequalities by taking the intersection $D_1 -10 \leq O_2 < D_1$.
\end{Example}

\noindent \textbf{Usage.} ELPs are used for optimizing data age or time disparity, whereas EFPs are used for optimizing reaction time.

\begin{definition}[Feasibility]
    An ELP (EFP) is feasible if there exists a set of virtual offset and virtual deadline assignments that satisfy its inequality constraints (\defref{elp inequalities}) and schedulability constraints~\eqref{schedulability_test_fLET}.
\end{definition}

\begin{Example}
Consider the edge $E_0:\tau_0 \rightarrow \tau_1$ in example~\ref{Example_setup} and an ELP which specifies $J_{1,0}$'s last-reading job to be $J_{0,3}$. 
It is infeasible because there are no solutions which can satisfy the following inequalities:
\begin{equation*}
    R_0 \leq D_0, D_0 +15 \leq O_1  < D_0+20,
    O_1+R_1 \leq D_1
\end{equation*}
where $R_0=1$, $R_1=5$ based on RM.
\end{Example}

The following lemmas provide the necessary conditions for an edge communication pattern to be feasible.

\begin{lemma}
\label{lemma_two_task_Reading_pattern_inequality}
Consider a feasible ELP of an edge $\tau_i \rightarrow \tau_j$ and two jobs $J_{j,q_u}$ and $J_{j,q_w}$ where $q_u < q_w$, their last-reading jobs specified by the ELP, $J_{i, \overleftarrow{q_u}}$ and $J_{i, \overleftarrow{q_w}}$, satisfy:
    $\overleftarrow{q_u} \leq \overleftarrow{q_w}$.
\end{lemma}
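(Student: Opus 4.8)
The plan is to prove the monotonicity claim directly from the defining inequalities of the last-reading job, using the fact that within a single ELP all jobs share the same virtual offset $O_j$ and the writer task shares the same virtual deadline $D_i$. First I would write down the last-reading inequality~\eqref{last_reading_job_ineq} for both jobs $J_{j,q_u}$ and $J_{j,q_w}$, namely
\begin{equation*}
D_i + \overleftarrow{q_u}\, T_i \leq O_j + q_u T_j < D_i + (\overleftarrow{q_u}+1)\, T_i,
\end{equation*}
and the analogous pair for $q_w$. The key structural observation is that the reading time $re_{J_{j,q}} = O_j + q T_j$ is strictly increasing in the job index $q$, so from $q_u < q_w$ we get $O_j + q_u T_j < O_j + q_w T_j$ (in fact $re_{J_{j,q_u}} \le re_{J_{j,q_w}} - T_j$).

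The main step is then a short contradiction argument. Suppose for contradiction that $\overleftarrow{q_w} < \overleftarrow{q_u}$, i.e. $\overleftarrow{q_w} + 1 \le \overleftarrow{q_u}$. From the upper bound of the last-reading inequality for $J_{j,q_w}$ we have $re_{J_{j,q_w}} < D_i + (\overleftarrow{q_w}+1) T_i$, and since $\overleftarrow{q_w}+1 \le \overleftarrow{q_u}$ this gives $re_{J_{j,q_w}} < D_i + \overleftarrow{q_u} T_i$. But the lower bound for $J_{j,q_u}$ states $D_i + \overleftarrow{q_u} T_i \le re_{J_{j,q_u}}$, so chaining these yields $re_{J_{j,q_w}} < re_{J_{j,q_u}}$, which contradicts $re_{J_{j,q_u}} < re_{J_{j,q_w}}$ established from $q_u < q_w$. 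Hence $\overleftarrow{q_u} \le \overleftarrow{q_w}$, as claimed. Note this argument never needs feasibility beyond the fact that both last-reading jobs actually exist and satisfy~\eqref{last_reading_job_ineq}; feasibility guarantees the inequalities are simultaneously satisfiable by some $(O_j, D_i)$, which is exactly what lets us treat $O_j$ and $D_i$ as fixed shared constants across the two inequalities.

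I do not expect a genuine obstacle here, since the result is essentially the statement that the last-reading map (a floor-type function of the reading time) is order-preserving. The one thing to be careful about is making sure the two inequalities genuinely share the \emph{same} $D_i$ and $O_j$ values: this is precisely the defining property of the flexible LET model (all jobs of a task share one virtual offset and deadline) and of an ELP (a single consistent assignment), so the shared-constant assumption is legitimate rather than something to be proved. A minor alternative to the contradiction route would be a direct derivation: subtract the lower bound of the $q_u$ inequality from the upper bound of the $q_w$ inequality to bound $(\overleftarrow{q_w} - \overleftarrow{q_u})T_i$ from below by $(q_w - q_u)T_j - T_i > -T_i$, forcing $\overleftarrow{q_w} - \overleftarrow{q_u} > -1$ and hence $\ge 0$ by integrality; but the contradiction form is cleaner to present.
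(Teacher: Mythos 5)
Your proof is correct and follows essentially the same route as the paper's: both assume $\overleftarrow{q_w} < \overleftarrow{q_u}$ for contradiction and chain the defining last-reading inequalities with the monotonicity of reading/writing times in the job index to derive $re_{J_{j,q_w}} < re_{J_{j,q_u}}$, contradicting $q_u < q_w$. The only cosmetic difference is that you work with the explicit linear parametrization $re_{J_{j,q}} = O_j + qT_j$ and $wr_{J_{i,q}} = D_i + qT_i$ (making the shared-constant point explicit), whereas the paper phrases the same chain abstractly in terms of $wr$ and $re$; your version is, if anything, slightly more rigorous than the paper's one-line argument.
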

\begin{proof}
    If $\overleftarrow{q_u} > \overleftarrow{q_w}$, then $J_{i,\overleftarrow{q_w}}$ cannot become $J_{j,q_w}$'s last reading job because $J_{j,q_w}$ could read from $J_{i,\overleftarrow{q_u}}$ since $wr_{J_{i,\overleftarrow{q_w}}} < wr_{J_{i,\overleftarrow{q_u}}} < re_{J_{j,q_u}} <re_{J_{j,q_w}}$. This causes a contradiction.
\end{proof}

\begin{lemma}
\label{lemma_two_task_Reacting_pattern_inequality}
Consider a feasible EFP of an edge $\tau_i \rightarrow \tau_j$ and two jobs $J_{i,q_u}$ and $J_{i,q_w}$
where $q_u < q_w$, their first-reacting jobs specified by the EFP, $J_{j, \overrightarrow{q_u}}$ and $J_{j, \overrightarrow{q_w}}$, satisfy $\overrightarrow{q_u} \leq \overrightarrow{q_w}$.
\end{lemma}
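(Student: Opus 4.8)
The plan is to mirror the contradiction argument of \lemref{lemma_two_task_Reading_pattern_inequality}, swapping the roles of reading and writing (and of the two tasks) so that ``last-reading'' monotonicity becomes ``first-reacting'' monotonicity. Before starting, I would record the two structural facts the argument needs. First, in the fLET model the writing and reading times are affine in the job index, $wr_{J_{i,q}} = qT_i + D_i$ and $re_{J_{j,q}} = O_j + qT_j$ (Definition~\ref{def_virtual_offset} and Definition~\ref{def_virtual_deadline}); since all periods are positive, both quantities are \emph{strictly} increasing in the index. In particular $q_u < q_w$ gives $wr_{J_{i,q_u}} < wr_{J_{i,q_w}}$, while the read times obey $re_{J_{j,a}} \le re_{J_{j,b}}$ whenever $a \le b$. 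Second, by \defref{first_reacting_job} the first-reacting index $\overrightarrow{q}$ of a producer job $J_{i,q}$ is exactly the smallest reader index whose read time is at least $wr_{J_{i,q}}$: the left inequality $re_{J_{j,\overrightarrow{q}-1}} < wr_{J_{i,q}}$ certifies minimality, and the right inequality $wr_{J_{i,q}} \le re_{J_{j,\overrightarrow{q}}}$ certifies validity.

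With these in hand, I would assume for contradiction that $\overrightarrow{q_u} > \overrightarrow{q_w}$, i.e. $\overrightarrow{q_w} \le \overrightarrow{q_u} - 1$, and argue that $J_{j,\overrightarrow{q_u}}$ cannot then be the first-reacting job of $J_{i,q_u}$ because $J_{i,q_u}$ would already react at the strictly earlier reader $J_{j,\overrightarrow{q_w}}$. Concretely, combining the validity inequality for the larger-index producer $J_{i,q_w}$ with strict write-time monotonicity yields $wr_{J_{i,q_u}} < wr_{J_{i,q_w}} \le re_{J_{j,\overrightarrow{q_w}}}$, so $J_{j,\overrightarrow{q_w}}$ is a valid reactor of $J_{i,q_u}$. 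On the other hand, chaining $\overrightarrow{q_w} \le \overrightarrow{q_u}-1$ with read-time monotonicity and the minimality clause of \defref{first_reacting_job} for $J_{i,q_u}$ gives $re_{J_{j,\overrightarrow{q_w}}} \le re_{J_{j,\overrightarrow{q_u}-1}} < wr_{J_{i,q_u}}$. The two chains force $re_{J_{j,\overrightarrow{q_w}}} < re_{J_{j,\overrightarrow{q_w}}}$, a contradiction, so $\overrightarrow{q_u} \le \overrightarrow{q_w}$ as claimed.

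Since this lemma is the exact dual of \lemref{lemma_two_task_Reading_pattern_inequality}, I do not expect a genuine obstacle; the one point requiring care is the bookkeeping of strict versus non-strict inequalities. The asymmetry of \defref{first_reacting_job} (strict ``$<$'' on the left, non-strict ``$\le$'' on the right) is precisely what makes the \emph{strict} write-time monotonicity indispensable: it is $wr_{J_{i,q_u}} < wr_{J_{i,q_w}}$, coming from $q_u < q_w$ together with $T_i > 0$, that prevents the first-reacting index from ever decreasing as the producer index grows. I would therefore state this strict monotonicity of write times explicitly as the single quantitative input, after which the remainder is the one-line inequality chain above.
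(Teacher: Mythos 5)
Your proof is correct and takes essentially the same approach the paper intends: the paper's own proof is skipped as ``similar to'' \lemref{lemma_two_task_Reading_pattern_inequality}, whose contradiction argument is exactly what you dualize by swapping reads and writes and invoking the monotonicity of fLET read/write times in the job index. One minor quibble: strict write-time monotonicity is not actually ``indispensable''---the strict inequality in the minimality clause $re_{J_{j,\overrightarrow{q_u}-1}} < wr_{J_{i,q_u}}$ already forces the contradiction even with non-strict $wr_{J_{i,q_u}} \leq wr_{J_{i,q_w}}$---but this does not affect the validity of your argument.
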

\begin{proof}
    The proof is similar to that of Lemma~\ref{lemma_two_task_Reading_pattern_inequality}.
\end{proof}

\begin{Example}
\label{example_TTCP}  
Table~\ref{DA_TCP_example} shows all the feasible ELPs in example~\ref{Example_setup} under different situations (since one job could read from different jobs). The mathematical expressions show the intersection of all the possible linear inequalities, following Equation~\eqref{eq_ecp_fr}. The linear inequalities are merged following \defref{elp inequalities}.
\end{Example}

\begin{table}[]
\centering
\caption{Feasible edge last-reading patterns in Example~\ref{Example_setup}}
\label{DA_TCP_example}
\resizebox{0.45\textwidth}{!}{%
\begin{tabular}{@{}llll@{}}
\toprule
Edge &
  \multicolumn{1}{c}{\begin{tabular}[c]{@{}c@{}}ELP Pattern\\ Symbol\end{tabular}} &
  \multicolumn{1}{c}{\begin{tabular}[c]{@{}c@{}}Last-reading\\ job pairs\end{tabular}} &
  \multicolumn{1}{c}{\begin{tabular}[c]{@{}c@{}}Mathematical\\ expression\end{tabular}} \\ \midrule
$E_0$ & $\boldsymbol{ELP}_{E_0}(0)$ & $J(0,2) \rightarrow J(1,0)$                                                                         & $D_0+10  \leq O_1 < D_0  +15$  \\ \midrule
$E_0$ & $\boldsymbol{ELP}_{E_0}(1)$ & $J(0,1) \rightarrow J(1,0)$                                                                         & $D_0+5  \leq O_1 < D_0+10  $   \\ \midrule
$E_0$ & $\boldsymbol{ELP}_{E_0}(2)$ & $J(0,0) \rightarrow J(1,0)$                                                                         & $D_0  \leq O_1 < D_0+5  $      \\ \midrule
$E_0$ & $\boldsymbol{ELP}_{E_0}(3)$ & $J(0,-1) \rightarrow J(1,0)$                                                                        & $D_0 -5 \leq O_1 < D_0  $      \\ \midrule
$E_1$ & $\boldsymbol{ELP}_{E_1}(0)$ & \begin{tabular}[c]{@{}l@{}}$J(1,0) \rightarrow J(2,0)$\\ $J(1,0) \rightarrow J(2,1)$\end{tabular}   & $D_1  \leq O_2 < D_1 +3 $      \\ \midrule
$E_1$ & $\boldsymbol{ELP}_{E_1}(1)$ & \begin{tabular}[c]{@{}l@{}}$J(1,-1) \rightarrow J(2,0)$\\ $J(1,0) \rightarrow J(2,1)$\end{tabular}  & $D_1 -10  \leq O_2 < D_1  $    \\ \midrule
$E_1$ & $\boldsymbol{ELP}_{E_1}(2)$ & \begin{tabular}[c]{@{}l@{}}$J(1,-1) \rightarrow J(2,0)$\\ $J(1,-1) \rightarrow J(2,1)$\end{tabular} & $D_1 -20  \leq O_2 < D_1 -10 $ \\ \midrule
$E_2$ & $\boldsymbol{ELP}_{E_2}(0)$ & \begin{tabular}[c]{@{}l@{}}$J(3,0) \rightarrow J(1,0)$\\ $J(3,0) \rightarrow J(1,1)$\end{tabular}   & $D_3  \leq O_1 < D_3 +8  $     \\ \midrule
$E_2$ & $\boldsymbol{ELP}_{E_2}(1)$ & \begin{tabular}[c]{@{}l@{}}$J(3,-1) \rightarrow J(1,0)$\\ $J(3,0) \rightarrow J(1,1)$\end{tabular}  & $D_3 -20 \leq O_1 < D_3  $     \\ \midrule
$E_2$ & $\boldsymbol{ELP}_{E_2}(2)$ & \begin{tabular}[c]{@{}l@{}}$J(3,-1) \rightarrow J(1,0)$\\ $J(3,-1) \rightarrow J(1,1)$\end{tabular} & $D_3 -40 \leq O_1 < D_3 -20 $  \\ \bottomrule
\end{tabular}%
}
\end{table}

\subsection{Graph Communication Pattern (GCP)}

Given a DAG, we introduce the concept of \textit{Graph Communication Pattern} (GCP) to comprehensively describe the reading and writing relationships among all the tasks in the system.
\begin{definition}[Graph last-reading pattern, GLP]
\label{def: glp}
    Consider a DAG $\boldsymbol{G}=(\boldsymbol{\tau}, \boldsymbol{E})$, 
    a graph last-reading pattern $\boldsymbol{GLP}$ is a set of edge last-reading patterns, where for each edge $E_k \in \boldsymbol{E}$, $\boldsymbol{GLP}$ contains one ELP, denoted as $\boldsymbol{GLP}(E_k)$.
\end{definition}
\begin{definition}[Partial GLP]
\label{def: partial glp}
    A partial GLP $\boldsymbol{pGLP}$ is a GLP where some or zero edges' ELPs are not included. 
\end{definition}
A GLP is complete if the ELPs of all the edges are added; A complete GLP is considered a partial GLP. A GLP is empty if no ELPs are added. The concepts of Graph First-reading Pattern (GFP) and partial GFP are defined similarly. Both GLP and GFP are considered GCP.

\begin{definition}[Feasibility]
\label{def: feasibility}
    A GLP (GFP) is feasible if there is a set of virtual offset and deadline variables that satisfy the linear inequalities of all the ELPs (EFPs). 
\end{definition}
Evaluating the feasibility of a GLP (GFP) is equivalent to evaluating the feasibility of a linear programming (LP) problem, which can be efficiently solved to optimality.

\begin{figure}[t!]
\centering
\includegraphics[width=1.0\columnwidth]{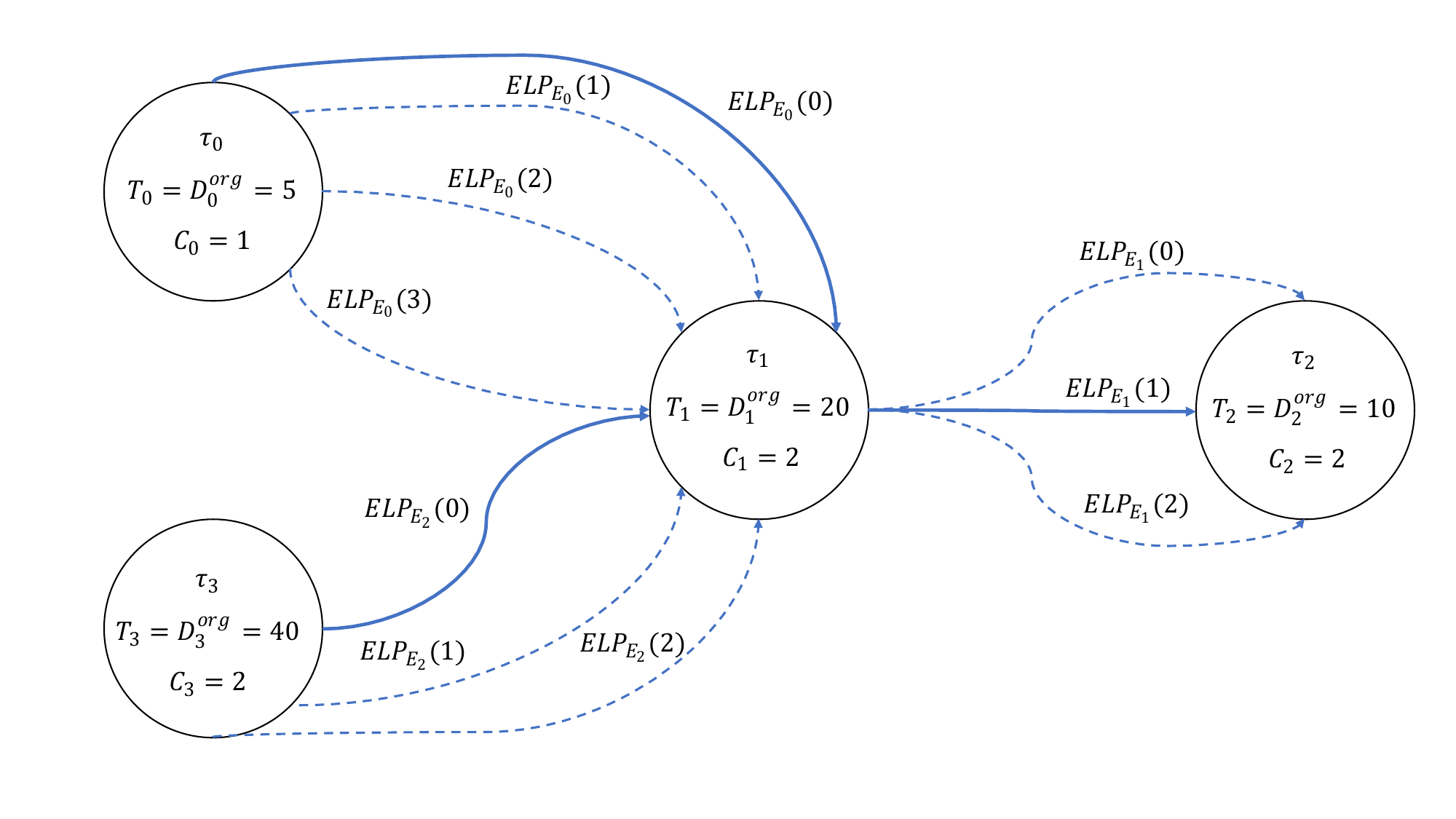}
\caption{
A multi-graph with the edge last-reading patterns (ELPs) in Example~\ref{Example_setup}. 
A complete graph last-reading pattern (GLP) comprises three ELPs from the three edges. There are $4\times 3 \times 3 = 36$ possible GLP combinations.
}
\label{fig_example_dag_multi_graph}
\end{figure}

\subsection{Evaluating Graph Communication Pattern}
\label{section_evaluate_gcp}

\begin{definition}[Graph last-reading pattern evaluation]
\label{def: glp evaluate}
    A GLP's evaluation, denoted $\hat{\mathcal{F}}_{\boldsymbol{GLP}}$, is the optimal objective function values obtained by solving an optimization problem (e.g., \eqref{obj_overall} or~\eqref{obj_overall_sf}) with extra schedulability constraints and linear constraints specified by $\boldsymbol{GLP}$.
\end{definition}
For example, if the objective function is given by~\eqref{obj_overall}, then:
\begin{align}
 \label{eq_def_f_g}
\hat{\mathcal{F}}_{\boldsymbol{GLP}} &= 
    \min_{\boldsymbol{O},\boldsymbol{D}}
    \sum_{\mathcal{C} \in \boldsymbol{\mathcal{C}}}
     \max \boldsymbol{\mathcal{F}}_{\mathcal{C}}(\boldsymbol{O},\boldsymbol{D})
\end{align}
\begin{equation}
    \forall \tau_i \in \boldsymbol{\tau}, 0 \leq O_i, \ O_i+R_i \leq D_i,\ D_i \leq D_i^{org}
\label{schedulability_constraint_gcp_eval}
\end{equation}
\begin{equation}
\forall E_k \in \boldsymbol{E},\  \mathcal{I}_{\boldsymbol{GLP}(E_k)}
\end{equation}
\noindent where $\mathcal{I}_{\boldsymbol{GLP}(E_k)}$ denotes the linear inequality constraint~\eqref{eq_ecp_fr} of the edge last-reading pattern $\boldsymbol{GLP}(E_k)$.
GFP's evaluation is defined similarly.

\begin{Example}
\label{example_evaluate_gcp}
    Continue with Example~\ref{Example_setup} and we will evaluate $\glp =\{\elp_{E_0}(0)$,
    $\boldsymbol{ELP}_{E_1}(1)$,
   $\boldsymbol{ELP}_{E_2}(0)\}$ with only \textbf{one} cause-effect chains $\mathcal{C}_0$. Consider a hyper-period of 20 (without considering $\tau_3$), $\tau_2$ has two jobs: $J_{2,0}$ and $J_{2,1}$. All the immediate backward job chains have been decided by $\boldsymbol{GLP}$: 
    $J_{0,-2} \rightarrow  J_{1,-1}  \rightarrow J_{2,0}$,
    $J_{0,2} \rightarrow  J_{1,0}  \rightarrow J_{2,1}$.
    The objective function becomes:
    \begin{align}\min_{\boldsymbol{O},\boldsymbol{D}} \ &
        (\max (wr_{J_{2,0}}-re_{J_{0,-2}}, \ wr_{J_{2,1}}-re_{J_{0,2}} ))
        \label{obj_example}
    \end{align}
    where the reading and writing times are given by Definition~\ref{def_virtual_offset} and~\ref{def_virtual_deadline}. The objective~\eqref{obj_example} can be converted into linear functions with extra artificial variables~\cite{Chen2010AppliedIP}: $a=wr_{J_{2,0}}-re_{J_{0,-2}}$, $b=wr_{J_{2,1}}-re_{J_{0,2}}$, $c=\max (a, b)$:
     \begin{align}
        \min_{\boldsymbol{O},\boldsymbol{D}} \ & c \\
        c \geq a,\  c &  \geq b
    \end{align}
    The ELP constraints of $\boldsymbol{GLP}$ are adopted from Table~\ref{DA_TCP_example}:
    \begin{align}
        D_0+10  &\leq O_1 < D_0  +15 \\
        D_1 -10  &\leq O_2 < D_1  \\
        D_3  &\leq O_1 < D_3 +8
    \end{align}
    The schedulability constraints~\eqref{schedulability_constraint_gcp_eval} remain the same. The optimal solutions of the LP above is $\boldsymbol{O}=\{0,11,6,0\}$, $\boldsymbol{D}=\{1,16,9,40\}$, 
$\hat{\mathcal{F}}_{\boldsymbol{GLP}}=19$.
\end{Example}

\begin{lemma}
\label{lemma_min_max_lp}
    Consider a min-max optimization problem 
    \begin{equation}
        \min_{\textbf{x}} \max (f_1(\textbf{x}), ..., f_n(\textbf{x}))
    \end{equation}
    where $f_i(\textbf{x})$ are linear functions. The problem above can be transformed into a linear programming (LP) problem by introducing extra continuous variables for each $f_i(\textbf{x})$.
\end{lemma}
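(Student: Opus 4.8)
The plan is to use the standard epigraph reformulation of a min-max objective. I would introduce a single auxiliary continuous variable $t$ that serves as a common upper bound on every $f_i(\mathbf{x})$, and rewrite the problem as minimizing $t$ subject to the $n$ linear constraints $t \geq f_i(\mathbf{x})$ for $i = 1, \dots, n$. Equivalently, following the paper's own convention in Example~\ref{example_evaluate_gcp}, I would introduce one variable $a_i$ per function together with the defining equalities $a_i = f_i(\mathbf{x})$, plus one variable $c$ with the constraints $c \geq a_i$, and then minimize $c$. Either way, the new objective and all new constraints remain affine in the extended variable vector $(\mathbf{x}, t)$ (resp. $(\mathbf{x}, a_1, \dots, a_n, c)$).

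The core of the argument is to show that the transformed program and the original share the same optimal value and the same optimal $\mathbf{x}$-component. First I would fix $\mathbf{x}$ and observe that the inner minimization $\min\{\,t : t \geq f_i(\mathbf{x}),\ i = 1,\dots,n\,\}$ is attained exactly at $t = \max_i f_i(\mathbf{x})$: any feasible $t$ must dominate every $f_i(\mathbf{x})$ and hence their maximum, while $t = \max_i f_i(\mathbf{x})$ is itself feasible and therefore optimal. Substituting this value back into the objective recovers $\min_{\mathbf{x}} \max_i f_i(\mathbf{x})$, which establishes the equivalence of optimal values and shows that the optimal $\mathbf{x}$ of the transformed problem solves the original. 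This \emph{for fixed $\mathbf{x}$, the best $t$ equals the pointwise maximum} observation is the crux of the proof.

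Finally I would confirm the LP structure explicitly: because each $f_i$ is affine in $\mathbf{x}$, every constraint $t - f_i(\mathbf{x}) \geq 0$ is a linear inequality in $(\mathbf{x}, t)$, and the objective $t$ is linear, so the transformed problem is a genuine linear program solvable to optimality. I expect no substantive obstacle here, as this is a classical reformulation; the only point demanding care is ensuring the added variable neither enlarges nor shrinks the attainable objective value, which is precisely what the fixed-$\mathbf{x}$ argument guarantees. I would also note that it is the linearity (rather than mere convexity) of the $f_i$ that upgrades the reformulation from a convex program to an LP, so I would flag that hypothesis at the step where the constraints are declared linear.
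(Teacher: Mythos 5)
Your proof is correct and takes essentially the same approach as the paper: the paper gives no formal proof of this lemma, only the illustration in Example~\ref{example_evaluate_gcp} (auxiliary variables $a$, $b$, $c$ with constraints $c \geq a$, $c \geq b$, minimizing $c$) together with a citation to a standard reference, and your epigraph reformulation is exactly that transformation. Your fixed-$\mathbf{x}$ argument showing the optimal auxiliary variable equals the pointwise maximum supplies the equivalence step the paper leaves implicit.
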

Lemma~\ref{lemma_min_max_lp} shows that evaluating GLPs is usually equivalent to LP.
An example of how to perform the transformation is shown in Example~\ref{example_evaluate_gcp}. 

\begin{theorem}
\label{theorem_glp_lp}
    Evaluating a GLP when the objective function is data age or time disparity is equivalent to solving an LP.
\end{theorem}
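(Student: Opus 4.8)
The plan is to show that fixing a complete GLP collapses the discrete, non-continuous search of~\eqref{obj_overall} (and of~\eqref{obj_overall_sf}) into a purely continuous problem in which every relevant quantity is affine in the decision variables $\boldsymbol{O}$ and $\boldsymbol{D}$, after which \lemref{lemma_min_max_lp} turns the resulting min-max objective into an LP. The crucial first step is to observe that a complete GLP pins down, for every edge, the last-reading job of each reader job through its ELP. Following these pairings backward along a cause-effect chain $\mathcal{C}$ uniquely determines every immediate backward job chain, and for a merge $\mathcal{M}$ it fixes the set $\boldsymbol{J}^{lr,s}$ of source jobs feeding each sink job. Thus the integer job indices that make the original problem combinatorial are frozen, and the only remaining freedom is the continuous assignment of $\boldsymbol{O}$ and $\boldsymbol{D}$.

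Next I would rewrite the objective terms as affine functions of $\boldsymbol{O},\boldsymbol{D}$. By Definitions~\ref{def_virtual_offset} and~\ref{def_virtual_deadline}, $re_{J_{i,q_i}} = O_i + q_i T_i$ and $wr_{J_{i,q_i}} = q_i T_i + D_i$. Hence the length of a now-fixed job chain $(J_{0,q_0},\dots,J_{n,q_n})$ equals $wr_{J_{n,q_n}} - re_{J_{0,q_0}} = (D_n - O_0) + (q_n T_n - q_0 T_0)$, which is affine since the period terms are constants once the chain is fixed. The same holds for each $wr_J$ appearing in the time-disparity definition~\eqref{time_disparity_def}. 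Consequently $\boldsymbol{\mathcal{F}}_{\mathcal{C}}$ and $\boldsymbol{\mathcal{F}}_{\mathcal{M}}$ are finite families of affine functions, finiteness following from the fact that only jobs within a super-period (resp. hyper-period) need be considered.

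I would then assemble the LP. For data age, \eqref{obj_overall} is a sum over chains of a $\max$ of affine functions; applying \lemref{lemma_min_max_lp}, I replace each $\max$ by an auxiliary variable constrained to dominate every affine term, which keeps the objective linear. For time disparity, \eqref{obj_overall_sf} additionally contains a $\min$ term, which I linearize via $\min_J wr_J = -\max_J(-wr_J)$, i.e. through an auxiliary variable bounded above by each affine term; the weighted combination $\max\boldsymbol{\mathcal{F}}_{\mathcal{M}} + \omega(\max\boldsymbol{\mathcal{F}}_{\mathcal{M}} - \min\boldsymbol{\mathcal{F}}_{\mathcal{M}})$ is then affine in the auxiliaries. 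Finally, the GLP's ELP inequalities (\defref{elp inequalities}) and the schedulability constraints~\eqref{schedulability_test_fLET} are already linear in $\boldsymbol{O},\boldsymbol{D}$, so the feasible region is a polyhedron and the entire evaluation is an LP.

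The main obstacle is conceptual rather than computational and lies in the first step: recognizing that committing to a single GLP is precisely what removes the non-linearity and non-continuity, because it eliminates the discrete choice of which writer each reader observes. Once that is established, linearizing the max and min objectives is routine via \lemref{lemma_min_max_lp}, with the only mild subtlety being the sign-flip needed to express the minimum in the time-disparity objective.
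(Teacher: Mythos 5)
Your route is essentially the paper's own: fixing a complete GLP freezes every immediate backward job chain and every set $\boldsymbol{J}^{lr,s}$, chain lengths and writing times become affine in $(\boldsymbol{O},\boldsymbol{D})$, the ELP and schedulability constraints are already linear, and \lemref{lemma_min_max_lp} together with the auxiliary variables $a \geq wr_J$, $b \leq wr_J$ handles the max/min structure. For data age~\eqref{obj_overall} and for the worst-case time disparity (the $\max \boldsymbol{\mathcal{F}}_{\mathcal{M}}$ term) your argument is correct and coincides with the paper's proof, merely with the affine formulas written out explicitly.

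There is, however, a genuine error in your last step: you claim that the full objective of~\eqref{obj_overall_sf}, $\max\boldsymbol{\mathcal{F}}_{\mathcal{M}} + \omega(\max\boldsymbol{\mathcal{F}}_{\mathcal{M}} - \min\boldsymbol{\mathcal{F}}_{\mathcal{M}})$, is ``affine in the auxiliaries,'' i.e., that the jitter term is LP-representable. It is not, and the paper deliberately keeps jitter out of this theorem (Observation~\ref{lemma_td_jitter_linearizable} relegates it to mixed-integer programming). The reason: each job's disparity $TD_q$ is a difference of a max and a min of affine functions, hence a \emph{convex} piecewise-affine function of $(\boldsymbol{O},\boldsymbol{D})$; the constraints $a_q \geq wr_J$, $b_q \leq wr_J$ therefore give only the one-sided (epigraph) relaxation $s_q := a_q - b_q \geq TD_q$. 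That relaxation is tight when $TD_q$ enters the minimization with positive weight, but the jitter contributes $-\omega \min_q TD_q$, which requires the hypograph side. Concretely, since each $s_q$ is free to inflate above $TD_q$, the LP minimum of $(1+\omega)\max_q s_q - \omega \min_q s_q$ is attained by setting every $s_q$ equal to $\max_q TD_q$, yielding the value $\max_q TD_q$: the LP silently returns the worst-case disparity and the jitter term vanishes, so it does not evaluate~\eqref{obj_overall_sf} at all. If you restrict the time-disparity half of your proof to the $\max\boldsymbol{\mathcal{F}}_{\mathcal{M}}$ objective (equivalently $\omega = 0$), as the theorem and the paper's proof implicitly do, the argument stands.
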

\begin{proof}
Since the schedulability constraints~\eqref{schedulability_constraint_gcp_eval} and all the ELP constraints (\defref{elp inequalities}) are linear, we will focus on the objective function in the proof.
Given a GLP, all the immediate backward job chains are uniquely decided. Therefore, the data age of each immediate backward job chain is a linear function of $\textbf{O}$ and $\textbf{D}$ (Definition~\ref{def_length_job_chain}). 

As for time disparity, equation~\eqref{time_disparity_def} can be transformed into linear functions by introducing two continuous artificial variables: $a=\max_{J \in \boldsymbol{J}^{lr,s} }wr_J$, $b=\min_{J \in \boldsymbol{J}^{lr,s}} wr_J$ with extra linear constraints:
    \begin{equation}
        \forall J \in \boldsymbol{J}^{lr,s}, a \geq wr_J, \ b \leq wr_J
    \end{equation}
    
    The theorem is proved after combining with Lemma~\ref{lemma_min_max_lp}.
\end{proof}

\begin{theorem}
    Evaluating a GFP when the objective function is reaction time is equivalent to solving an LP.
\end{theorem}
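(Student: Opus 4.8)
The plan is to mirror the proof of Theorem~\ref{theorem_glp_lp}, exploiting the duality between the last-reading/backward construction used there and the first-reacting/forward construction relevant here. First I would observe that every constraint entering the evaluation is linear in the decision variables $\boldsymbol{O}$ and $\boldsymbol{D}$: the schedulability constraints~\eqref{schedulability_test_fLET} are linear by inspection, and the EFP inequalities are linear for exactly the same reason the ELP inequalities are. Each is obtained from the defining inequality of a first-reacting job (Definition~\ref{def: first_reacting_job}), which relates $wr_{J_{i,q_i}}$ and $re_{J_{j,\overrightarrow{q_i}}}$, and by Definitions~\ref{def_virtual_offset} and~\ref{def_virtual_deadline} these reading and writing times are affine functions of the offsets and deadlines. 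Hence the feasible region carved out by a GFP together with schedulability is a polytope.

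The crux is to argue that fixing a GFP makes the reaction-time objective a maximum over a fixed, finite collection of linear functions. A GFP assigns to every edge $E_k$ one EFP, which pins down the first-reacting job of each relevant job across that edge; composing these edge-wise first-reacting relations along a cause-effect chain yields exactly the immediate forward job chains, and these are now determined combinatorially and independently of the particular feasible $(\boldsymbol{O},\boldsymbol{D})$. The monotonicity of Lemma~\ref{lemma_two_task_Reacting_pattern_inequality} guarantees that the edge-wise maps are order-preserving, so the composition along a chain is well defined and consistent. For each such forward job chain $(J_{0,q_0},\dots,J_{n,q_n})$, its length is $wr_{J_{n,q_n}} - re_{J_{0,q_0}}$ by Definition~\ref{def_length_job_chain}, which is again affine in $\boldsymbol{O}$ and $\boldsymbol{D}$. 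Thus the reaction-time objective~\eqref{obj_overall} becomes a sum over cause-effect chains of the maximum of finitely many linear length functions.

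Finally I would apply Lemma~\ref{lemma_min_max_lp} to each inner $\max$, introducing one auxiliary continuous variable per cause-effect chain together with the linear upper-bound constraints that dominate every forward-chain length; minimizing the sum of these auxiliary variables subject to all the (linear) schedulability, EFP, and domination constraints is a linear program whose optimum equals the GFP evaluation. The only real obstacle, and the step deserving the most care, is the middle one: verifying that a GFP genuinely fixes the forward job chains as a single determined set, so that the objective is the maximum of a \emph{fixed} family of linear functions rather than a piecewise expression whose active pieces depend on $(\boldsymbol{O},\boldsymbol{D})$. Once that determinacy is established, through the definition of the GFP and the monotonicity of Lemma~\ref{lemma_two_task_Reacting_pattern_inequality}, everything else is the routine linearization already carried out for Theorem~\ref{theorem_glp_lp}.
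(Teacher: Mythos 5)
Your proposal is correct and follows exactly the route the paper intends: the paper's own proof is simply ``skipped because it is similar to Theorem~\ref{theorem_glp_lp},'' and what you wrote is precisely that similarity argument spelled out (linear EFP and schedulability constraints, forward job chains fixed combinatorially by the GFP so each chain length is affine in $\boldsymbol{O},\boldsymbol{D}$, then Lemma~\ref{lemma_min_max_lp} for the min-max linearization). The only cosmetic remark is that the determinacy of the forward chains already follows from the EFP being a job map (Definition~\ref{def: job map}); the monotonicity of Lemma~\ref{lemma_two_task_Reacting_pattern_inequality} is not actually needed for that step.
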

\begin{proof}
    Skipped because it is similar to Theorem~\ref{theorem_glp_lp}.
\end{proof}

\begin{observation}
\label{lemma_td_jitter_linearizable}
    Evaluating a GLP when the objective function is a time disparity jitter could be equivalent to mixed integer linear programming by following the methods explained in Chapter 3 in Chen \etal~\cite{Chen2010AppliedIP}. How to do it exactly and perform efficient GLP optimization are left as future work.
\end{observation}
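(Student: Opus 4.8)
The plan is to reduce the evaluation of a GLP under the time-disparity-jitter objective to an MILP by reusing the linearization of \thmref{theorem_glp_lp} for every term that is ``pushed in a single direction'' and by introducing binary selector variables only for the one term whose operator direction forbids a pure LP encoding. First I would fix the GLP, so that for every merge $\mathcal{M}$ and every sink job $J_{j,q_j}$ the last-reading source set $\boldsymbol{J}^{lr,s}$ is fully determined; consequently every writing time $wr_J$ is an affine function of $\boldsymbol{O}$ and $\boldsymbol{D}$, and each per-job time disparity $TD(J_{j,q_j})$ in \eqref{time_disparity_def} becomes a difference $\max_{J} wr_J - \min_{J} wr_J$ of maxima and minima of affine functions.

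Second, I would recall from \eqref{obj_overall_sf} that the jitter contribution of a merge is $\max \boldsymbol{\mathcal{F}}_{\mathcal{M}} - \min \boldsymbol{\mathcal{F}}_{\mathcal{M}}$, that is, the maximum over sink jobs of $TD$ minus the minimum over sink jobs of $TD$. The first term behaves exactly like the quantities in \thmref{theorem_glp_lp}: since it enters the objective with a positive weight and is minimized, I can introduce one auxiliary variable $u$ with $u \ge TD(J)$ together with the affine encoding $p_J \ge wr_{J'}$ and $q_J \le wr_{J'}$ for each source writing time, setting $TD(J)=p_J-q_J$. Minimization then drives $p_J$ to the true inner maximum and $q_J$ to the true inner minimum, so this part stays within LP exactly as in \lemref{lemma_min_max_lp}.

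The hard part, and the reason the result is MILP rather than LP, is the second, negatively weighted term, which amounts to maximizing $\min_{J} TD(J)$. Each $TD(J)$ is convex (a maximum minus a minimum of affine functions), a pointwise minimum of convex functions is in general neither convex nor concave, and maximizing it therefore carves out a non-convex feasible region: the constraint $v \le \max_k wr_{J,k} - \min_l wr_{J,l}$ one would use to lower-bound the inner minimum cannot be captured by one-sided linear inequalities, because here the inner $\max$ and $\min$ must be driven in the \emph{opposite} direction from the first term and so cannot share the auxiliaries $p_J, q_J$. I would resolve this disjunction with binary selector variables that choose, per sink job, which source job realizes the inner maximum and which realizes the inner minimum, enforced through big-$M$ (indicator) constraints following Chapter~3 of Chen~\etal~\cite{Chen2010AppliedIP}. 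Collecting the continuous variables $\boldsymbol{O},\boldsymbol{D}$, the affine auxiliaries, and these binaries under a linear objective and linear constraints yields an MILP, which establishes the observation; pinning down the most compact selector encoding and exploiting it for fast GLP search is precisely what is deferred to future work.
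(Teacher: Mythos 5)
Your proposal is correct and follows exactly the route the paper gestures at: the paper itself gives no proof of this observation (it explicitly defers the construction to future work), merely pointing to the indicator/big-$M$ modeling techniques of Chen~\etal~\cite{Chen2010AppliedIP}. Your sketch---reusing the LP auxiliaries of \thmref{theorem_glp_lp} for the positively weighted $\max$ terms, and introducing binary selectors with big-$M$ constraints only for the negatively weighted $\min_{J} TD(J)$ term, whose lower-bounding constraint is inherently disjunctive---is a sound instantiation of that pointer and in fact supplies more detail than the paper does.
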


\section{Two-stage Optimization}
\label{section_optimization}
Finding the optimal virtual offset and virtual deadline for problem~\eqref{obj_overall} or \eqref{obj_overall_sf} can be solved in two stages: 
\begin{itemize}[leftmargin=*]
    \item Enumerate all the possible GLPs (GFPs);
    \item Evaluate (\defref{glp evaluate}) and select the GLP (GFP) with the minimum objective function value;
\end{itemize}
The optimal virtual offset and virtual deadline are obtained when evaluating the optimal GLPs (GFPs). 

The two-stage method is beneficial because numerous infeasible and non-optimal GLPs can be quickly skipped, as will be introduced next.
Since all the definitions and operations of GLP can be symmetrically applied to GFP, we will primarily use GLP to illustrate the algorithms and mention only the necessary changes when optimizing GFPs. 

\subsection{Finding Optimal GCP, Enumeration Method}
\label{section_opt_bf}
A simple way to find the optimal GLP is to enumerate all the possible ELP combinations. 
Since each edge $E_k$ has multiple ELPs to select, the enumeration process can be modeled as a multi-graph~\cite{enwiki_multi_graph}. 
\begin{Example}
\label{example_gcp_bf}
    In Fig.~\ref{fig_example_dag_multi_graph}, there are $36$ GLPs to enumerate.
\end{Example}

\subsection{Finding Optimal GCP, Backtracking}
\label{section_opt_skip}
Since evaluating GLP's feasibility is equivalent to evaluating linear programming (LP)'s feasibility (\defref{feasibility}), the enumeration method can be expedited using a backtracking algorithm. 
During enumeration, feasibility is assessed each time a new ELP is added to a partial GLP \Pite{}. If \Pite{} is infeasible, then no other ELPs will be added to it.

Compared with the enumeration method, the backtracking method avoids evaluating infeasible GLPs while maintaining optimality, resulting in significantly faster speed. 
\begin{Example}
\label{example_skip_infeasible_gcp}
    Continue with Example~\ref{example_gcp_bf}, 10 out of the 36 possible GLPs are not feasible and will be skipped, such as $\{ \boldsymbol{ELP}_{E_0}(0)$, 
    $\boldsymbol{ELP}_{E_1}(1)$, 
    $\boldsymbol{ELP}_{E_2}(0)\}$.
\end{Example}

\begin{theorem}
    The enumeration and the backtracking method above guarantee finding the optimal solutions if not time-out.
\end{theorem}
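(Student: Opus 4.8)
The plan is to reduce the correctness of both methods to one structural fact: the feasible assignment space of $(\boldsymbol{O},\boldsymbol{D})$ is covered by the feasible regions of the finitely many complete GLPs, and on each region the objective coincides with that GLP's evaluation. First I would show that every schedulable assignment $(\boldsymbol{O},\boldsymbol{D})$ satisfying~\eqref{schedulability_test_fLET} induces exactly one complete GLP: by \defref{last_reading_job}, for every edge $\tau_i\to\tau_j$ and every job $J_{j,q_j}$ the last-reading job is the unique index satisfying~\eqref{last_reading_job_ineq}, so collecting these maps over all edges yields a single complete GLP. Conversely, by \defref{elp inequalities} and \defref{glp}, the assignments inducing a fixed GLP are exactly the points satisfying that GLP's linear inequalities together with~\eqref{schedulability_test_fLET}. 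Hence the feasible space is the union of the finitely many GLP regions, and by Theorem~\ref{theorem_glp_lp} the objective restricted to one region is precisely the linear objective used in its evaluation.

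Next I would decompose the global optimum. Since the feasible space is covered by GLP regions, minimizing the objective over the whole space equals taking, for each GLP, the minimum of the objective over its region, namely $\hat{\mathcal{F}}_{\boldsymbol{GLP}}$ (\defref{glp evaluate}), and then minimizing over GLPs:
\[
\min_{\boldsymbol{O},\boldsymbol{D}} \mathrm{obj}(\boldsymbol{O},\boldsymbol{D}) \;=\; \min_{\boldsymbol{GLP}\ \text{feasible}} \hat{\mathcal{F}}_{\boldsymbol{GLP}},
\]
where $\mathrm{obj}$ is the objective of problem~\eqref{obj_overall} or~\eqref{obj_overall_sf}. The enumeration method evaluates every complete GLP and returns the smallest feasible value, so it attains the right-hand side; the minimizing LP simultaneously yields the optimal $(\boldsymbol{O},\boldsymbol{D})$. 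This settles the enumeration case.

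For backtracking it remains to show that pruning discards no feasible complete GLP. The key is a nesting property: extending a partial GLP \Pite{} by one more ELP only intersects its constraint set with additional linear inequalities, so the feasible region is monotonically non-increasing. Thus if \Pite{} is already infeasible, so is every completion of it, and the pruned subtree contains only infeasible complete GLPs. Consequently every feasible complete GLP survives and is evaluated, so backtracking minimizes over the same feasible collection as enumeration and returns the identical optimum. The hypothesis ``if not time-out'' is used only to guarantee the search is carried to completion.

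The step I expect to demand the most care is the covering argument in the presence of the mixed strict/non-strict inequality~\eqref{last_reading_job_ineq}: the GLP regions are half-open, so I must verify both that no feasible assignment escapes every region and that the closed LP relaxation used in \defref{glp evaluate} reports the correct value on each half-open region. I would handle this by observing that the last-reading map of~\eqref{last_reading_job} is single-valued everywhere, so a boundary assignment merely belongs to an adjacent GLP rather than to none; the regions therefore genuinely partition the feasible space, and any boundary discrepancy between the strict constraints and their closed LP counterparts is measure-zero and does not alter the optimal objective value.
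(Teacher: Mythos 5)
Your proposal is correct and follows essentially the same route as the paper, whose entire proof is the one-line observation that both methods evaluate all feasible GLPs (GFPs) and hence return the best one. Your argument simply makes explicit the facts the paper leaves implicit---that the GLP regions cover the feasible $(\boldsymbol{O},\boldsymbol{D})$ space so the global optimum equals $\min_{\boldsymbol{GLP}} \hat{\mathcal{F}}_{\boldsymbol{GLP}}$, and that backtracking prunes only completions of infeasible partial GLPs since adding an ELP only adds constraints---so it is a fleshed-out version of the paper's proof rather than a different approach.
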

\begin{proof}
    Since both methods evaluate all the feasible GLPs (GFPs), the solutions are guaranteed to be optimal.
\end{proof}

\section{DART Symbolic Optimization}
\label{section_dart_optimization}
This section proposes an efficient backtracking algorithm based on symbolic operations when minimizing data age or reaction time (DART). 
During iterations, many feasible, but \textit{possibly non-optimal} GLPs (GFPs) will be skipped based on new theorems. The algorithm guarantees that the performance of the final result falls within a small bound of the optimal solution (\ie bounded suboptimality).
The algorithm is motivated by an example shown in Fig.~\ref{let_skip_example_fig}. 

\begin{figure}[t!]
\centering
\includegraphics[width=1.0\columnwidth]{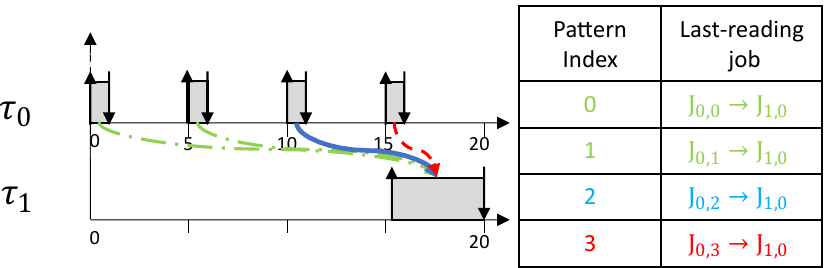}
\caption{Communication pattern comparison. Consider a simple DAG with only one cause-effect chain $\tau_0 \rightarrow \tau_1$. Furthermore, to illustrate the idea more easily, we assume that $\tau_1$'s virtual deadline is \textit{locked} at 20. In this case, the optimal edge last-reading pattern (ELP) is \elpmore{0}{2}: \elpmore{0}{3} is infeasible; \elpmore{0}{2} has shorter end-to-end latency than \elpmore{0}{1} and \elpmore{0}{0} (\elpmore{0}{0} and \elpmore{0}{1} require moving $\tau_1$'s virtual offset forward).
Notice that we can find the optimal ELP based on the pattern index with a feasibility check (a larger index implies a shorter immediate backward job chain).
} 
\label{let_skip_example_fig}
\end{figure}

\subsection{Comparing Edge Communication Patterns}
\label{section_compare_ecp}
\begin{definition}[ELP Comparison, smaller, $\preccurlyeq $]
\label{def: elp comp}
    Consider two edge last-reading patterns \elpmore{k}{u} and $\boldsymbol{ELP}_{E_k}(w)$ for an edge $E_k=\tau_i \rightarrow \tau_j$. For any job $J_{j,q_j}$, if the job index of its last-reading job in $\boldsymbol{ELP}_{E_k}(u)$ is always less than or equal to those in $\boldsymbol{ELP}_{E_k}(w)$, then $\boldsymbol{ELP}_{E_k}(u) \preccurlyeq \boldsymbol{ELP}_{E_k}(w)$. 
\end{definition}
\begin{definition}[GLP Comparison, smaller, $\preccurlyeq $]
\label{def: glp comp}
    If, for each edge $E_k$, the ELPs of two GLPs satisfy $\boldsymbol{GLP}_{u}(E_k) \preccurlyeq \boldsymbol{GLP}_{w}(E_k)$, then $\boldsymbol{GLP}_{u} 
\preccurlyeq \boldsymbol{GLP}_{w}$.
\end{definition}

Similarly, we can define the comparison for EFP and GFP, except we compare the job index of first-reacting jobs.

\begin{Example}
    Consider the edge last-reading patterns in Table~\ref{DA_TCP_example}. We have $ \boldsymbol{ELP}_{E_1}(2)$ $\preccurlyeq$ $\boldsymbol{ELP}_{E_1}(1)$ $\preccurlyeq $ $\boldsymbol{ELP}_{E_1}(0)$. 
    However, consider two imaginary ELPs:
    \begin{itemize}
    \item $\boldsymbol{ELP}_{E_1}(3): 
   \{J(1,0)\rightarrow J(2,0), 
    J(1,2) \rightarrow J(2,1)\}$    
    \item $\boldsymbol{ELP}_{E_1}(4):\{J(1,1) \rightarrow J(2,0), 
    J(1,1) \rightarrow J(2,1)\}$
    \end{itemize}
    then neither $\boldsymbol{ELP}_{E_1}(3) \preccurlyeq \boldsymbol{ELP}_{E_1}(4)$ nor $\boldsymbol{ELP}_{E_1}(4) \preccurlyeq \boldsymbol{ELP}_{E_1}(3)$ holds. (Note $\boldsymbol{ELP}_{E_1}(3)$ and $\boldsymbol{ELP}_{E_1}(4)$ are only used for illustration and do not exist in Example~\ref{Example_setup}).
\end{Example}

\subsection{Identifying Non-optimal GLPs (GFPs)}
\begin{theorem}
\label{theorem_DA_chain}
Consider a cause effect chain $\mathcal{C}:\tau_0 \rightarrow ... \rightarrow  \tau_n$ and two feasible graph last-reading patterns $\boldsymbol{GLP}_{u} \preccurlyeq \boldsymbol{GLP}_{w}$.
For any job $J_{n, q_n}$, denote $\tau_0$'s job in $J_{n, q_n}$'s immediate backward job chain in $\boldsymbol{GLP}_{u}$ and $\boldsymbol{GLP}_{w}$ as $J_{0, q^u_0}$ and $J_{0, q^w_0}$, then $q^u_0\leq q^w_0$.
\end{theorem}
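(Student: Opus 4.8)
The plan is to prove the claim by backward induction along the chain, from the sink task $\tau_n$ down to the source task $\tau_0$. I would first set up the two immediate backward job chains explicitly: write the chain in $\boldsymbol{GLP}_{u}$ as $(J_{0,q_0^u}, J_{1,q_1^u}, \ldots, J_{n,q_n})$ and the chain in $\boldsymbol{GLP}_{w}$ as $(J_{0,q_0^w}, \ldots, J_{n,q_n})$, where by definition of the immediate backward job chain each $J_{i,q_i^u}$ is the last-reading job of $J_{i+1,q_{i+1}^u}$ under the edge pattern $\boldsymbol{GLP}_{u}(E_i)$ for the edge $E_i = \tau_i \rightarrow \tau_{i+1}$, and analogously for $w$. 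In other words, $q_i^u = \overleftarrow{q_{i+1}^u}$ evaluated under $\boldsymbol{GLP}_{u}(E_i)$, and likewise for $w$. The induction hypothesis to propagate upward in $i$ is $q_{i+1}^u \leq q_{i+1}^w$; the base case $i=n$ is immediate, since both chains terminate at the common job $J_{n,q_n}$, so $q_n^u = q_n^w = q_n$.

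For the inductive step, assuming $q_{i+1}^u \leq q_{i+1}^w$, I would establish $q_i^u \leq q_i^w$ by inserting an intermediate quantity. Let $q_i'$ denote the index of the last-reading job of the \emph{same} reader job $J_{i+1,q_{i+1}^u}$ but evaluated under the \emph{other} pattern $\boldsymbol{GLP}_{w}(E_i)$. The first inequality $q_i^u \leq q_i'$ follows directly from the GLP comparison $\boldsymbol{GLP}_{u} \preccurlyeq \boldsymbol{GLP}_{w}$ (Definition \ref{def: glp comp}) specialized to edge $E_i$: for the fixed reader job $J_{i+1,q_{i+1}^u}$, the last-reading index under $\boldsymbol{GLP}_{u}(E_i)$ is at most the one under $\boldsymbol{GLP}_{w}(E_i)$. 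The second inequality $q_i' \leq q_i^w$ follows from the within-pattern monotonicity of Lemma \ref{lemma_two_task_Reading_pattern_inequality} applied to the single feasible pattern $\boldsymbol{GLP}_{w}(E_i)$: since $q_{i+1}^u \leq q_{i+1}^w$ are indices of two reader jobs of $\tau_{i+1}$, their last-reading indices under the same pattern preserve the order. Chaining the two bounds gives $q_i^u \leq q_i' \leq q_i^w$, which closes the step and, continued down to $i=0$, yields $q_0^u \leq q_0^w$.

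The main subtlety, and the step I expect to require the most care, is that $q_i^u$ and $q_i^w$ cannot be compared directly: they arise from \emph{two different reader jobs} ($J_{i+1,q_{i+1}^u}$ versus $J_{i+1,q_{i+1}^w}$) evaluated under \emph{two different edge patterns}. Decoupling these two sources of variation is precisely what the intermediate quantity $q_i'$ accomplishes, switching the pattern while holding the reader job fixed (handled by the GLP comparison), then switching the reader job while holding the pattern fixed (handled by Lemma \ref{lemma_two_task_Reading_pattern_inequality}). I would also be careful that Lemma \ref{lemma_two_task_Reading_pattern_inequality} as stated assumes a strict inequality $q_u < q_w$ on the reader indices, so the equality case $q_{i+1}^u = q_{i+1}^w$ should be dispatched separately; in that case the two reader jobs coincide and $q_i' = q_i^w$ holds outright, so the chaining still goes through.
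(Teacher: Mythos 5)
Your proposal is correct and takes essentially the same approach as the paper's proof: induction along the chain, with the inductive step closed by inserting an intermediate last-reading index that decouples the change of reader job (handled by \lemref{lemma_two_task_Reading_pattern_inequality}) from the change of edge pattern (handled by \defref{elp comp}). The only difference is cosmetic---the paper's intermediate quantity is the $w$-chain's reader job evaluated under $\boldsymbol{GLP}_u$, whereas yours is the $u$-chain's reader job evaluated under $\boldsymbol{GLP}_w$---and your explicit dispatch of the equality case $q_{i+1}^u = q_{i+1}^w$, which the strict-inequality statement of the lemma does not literally cover, is a point of care the paper glosses over.
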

\begin{proof}
We prove the theorem by induction. 
Following \defref{glp comp}, the theorem holds when $\mathcal{C}$ has only two tasks.
Next, we assume the theorem holds for the cause-effect chain $\tau_i \rightarrow ... \rightarrow \tau_n$, and prove that it holds for $\tau_{i-1} \rightarrow ... \rightarrow \tau_n$.

Let's denote a job $J_{n,q_n}$'s last reading job at task $\tau_{i}$ under $\boldsymbol{GLP}_{u}$ and $\boldsymbol{GLP}_{w}$ as $J_{i, q^u_i }$ and $J_{i, q^w_i }$. We know
$q^u_i \leq q^w_i $ based on the induction assumption. 
Next, at task $\tau_{i-1}$, let's denote 
the last-reading job of $J_{i, q^u_i}$ under $\boldsymbol{GLP}_{u}$ as $J_{i-1, \overleftarrow{q}^u_i}$,
the last-reading job of $J_{i, q^w_i}$ under $\boldsymbol{GLP}_{u}$ as $J_{i-1, \overleftarrow{q}^{wu}_i}$,
the last-reading job of $J_{i, q^w_i}$ under $\boldsymbol{GLP}_{w}$ as $J_{i-1, \overleftarrow{q}^w_i}$. 
Then we have $\overleftarrow{q}^u_i \leq \overleftarrow{q}^{wu}_i $ from Lemma~\ref{lemma_two_task_Reading_pattern_inequality}, and $\overleftarrow{q}^{wu}_i \leq \overleftarrow{q}^w_i $ from the assumption that $\boldsymbol{GLP}_{u} \preccurlyeq \boldsymbol{GLP}_{w}$ and \defref{elp comp}. Therefore, $\overleftarrow{q}^u_i \leq \overleftarrow{q}^w_i$, which proves the induction.
\end{proof}

\begin{theorem}
\label{theorem_RT_chain}
Consider a cause-effect chain $\mathcal{C}:\tau_0 \rightarrow ... \rightarrow  \tau_n$, and two feasible graph first-reacting patterns $\boldsymbol{GFP}_{u} 
\preccurlyeq \boldsymbol{GFP}_{w}$.
Now consider a job $J_{0, q_0}$ of $\tau_0$, 
denote $\tau_n$'s job in $J_{0, q_0}$'s immediate forward job chain in $\boldsymbol{GFP}_{u}$ and $\boldsymbol{GFP}_{w}$ as $J_{n, q_n^u}$ and $J_{n, q_n^w}$, then $q_n^u \leq q_n^w$.
\end{theorem}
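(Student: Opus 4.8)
The plan is to prove \thmref{theorem_RT_chain} as the exact forward/first-reacting mirror of \thmref{theorem_DA_chain}, by induction along the chain, with every ``last-reading'' ingredient swapped for its ``first-reacting'' counterpart. Concretely, I would replace \lemref{lemma_two_task_Reading_pattern_inequality} by \lemref{lemma_two_task_Reacting_pattern_inequality}, and replace the GLP/ELP comparison (\defref{glp comp}, \defref{elp comp}) by the GFP/EFP comparison (the first-reacting analog described immediately after \defref{glp comp}). Because a forward job chain flows from the source $\tau_0$ toward the sink $\tau_n$, the natural induction grows the chain \emph{from the tail}: I fix the source job $J_{0,q_0}$ and induct on the prefix $\tau_0 \rightarrow \cdots \rightarrow \tau_m$, strengthening $m$ from $1$ up to $n$.

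For the base case $m=1$, the chain is a single edge $\tau_0 \rightarrow \tau_1$, and the claim $q_1^u \le q_1^w$ is precisely the statement that for the common input job $J_{0,q_0}$ the first-reacting job index under $\boldsymbol{GFP}_u$ is at most that under $\boldsymbol{GFP}_w$, which is the defining property of $\boldsymbol{GFP}_u \preccurlyeq \boldsymbol{GFP}_w$ (first-reacting analog of \defref{glp comp}). For the inductive step, suppose $J_{0,q_0}$'s forward chain reaches $\tau_m$ at $J_{m,q_m^u}$ under $\boldsymbol{GFP}_u$ and at $J_{m,q_m^w}$ under $\boldsymbol{GFP}_w$ with $q_m^u \le q_m^w$. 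I would extend one edge $\tau_m \rightarrow \tau_{m+1}$ using an intermediate term, exactly as in \thmref{theorem_DA_chain}: let $J_{m+1,q_{m+1}^u}$ be the first-reacting job of $J_{m,q_m^u}$ under $\boldsymbol{GFP}_u$, let $J_{m+1,\overrightarrow{q}^{wu}}$ be the first-reacting job of $J_{m,q_m^w}$ under $\boldsymbol{GFP}_u$, and let $J_{m+1,q_{m+1}^w}$ be the first-reacting job of $J_{m,q_m^w}$ under $\boldsymbol{GFP}_w$. Applying \lemref{lemma_two_task_Reacting_pattern_inequality} to the two $\tau_m$-jobs $q_m^u \le q_m^w$ under the single EFP coming from $\boldsymbol{GFP}_u$ gives $q_{m+1}^u \le \overrightarrow{q}^{wu}$, and applying the EFP comparison (same input job $J_{m,q_m^w}$, since $\boldsymbol{GFP}_u \preccurlyeq \boldsymbol{GFP}_w$) gives $\overrightarrow{q}^{wu} \le q_{m+1}^w$. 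Chaining these yields $q_{m+1}^u \le q_{m+1}^w$, closing the induction; taking $m=n$ gives $q_n^u \le q_n^w$.

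I do not expect a genuine obstacle here: the argument is the structural dual of the already-proved \thmref{theorem_DA_chain}, and the author may well simply write ``similar to \thmref{theorem_DA_chain}.'' The only two points I would check carefully are, first, that the forward-chain monotonicity \lemref{lemma_two_task_Reacting_pattern_inequality} is genuinely valid as stated (its proof is deferred as ``similar'' to \lemref{lemma_two_task_Reading_pattern_inequality}, and the composition above relies on it applying to two jobs of $\tau_m$ under a \emph{fixed} EFP), and second, that the intermediate job $J_{m+1,\overrightarrow{q}^{wu}}$ is introduced on the correct side so that the two monotonicities compose in the same direction. Getting that bookkeeping right — one inequality from ``two producer jobs, same pattern'' and the other from ``same producer job, two comparable patterns'' — is the one place a careless write-up could slip, so I would state the three jobs explicitly rather than appeal to symmetry.
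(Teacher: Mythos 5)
Your proposal is correct and is exactly the argument the paper intends: the paper's own proof of \thmref{theorem_RT_chain} is simply ``skipped because it is similar to \thmref{theorem_DA_chain},'' and your mirrored induction --- growing the chain from the source, with one inequality from \lemref{lemma_two_task_Reacting_pattern_inequality} (two producer jobs, one fixed EFP) and one from the GFP/EFP comparison (one producer job, two comparable patterns) --- is precisely that dual argument, including the three-job intermediate term used in the paper's proof of \thmref{theorem_DA_chain}. Your bookkeeping (handling $q_m^u = q_m^w$ trivially, and requiring feasibility of $\boldsymbol{GFP}_u$ so the lemma applies) is sound, so no gap remains.
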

\begin{proof}
    Skipped because it is similar to Theorem~\ref{theorem_DA_chain}.
\end{proof}

\begin{lemma}
\label{theorem_compare_mn_bound}
Continue with Theorem~\ref{theorem_DA_chain}, denote the immediate backward job chain that terminates at $J_{n,q_n}$ in $\boldsymbol{GLP}_u$ and $\boldsymbol{GLP}_w$ as $\mathcal{C}^J_u$ and $\mathcal{C}^J_w$. Then we have $\text{Len}(\mathcal{C}^J_w) -\text{Len}(\mathcal{C}^J_u) \leq B_{\mathcal{C}}$, where $B_{\mathcal{C}}\leq T_n + T_0$.
\end{lemma}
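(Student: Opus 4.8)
The plan is to collapse the comparison to the two \emph{endpoints} of the job chains. By Definition~\ref{def_length_job_chain}, $\text{Len}(\mathcal{C}^J)=wr_{J_{n,q_n}}-re_{J_{0,q_0}}$ depends only on the sink job's writing time and the source job's reading time, so the intermediate links of the chain are irrelevant to the length. Since $\mathcal{C}^J_u$ and $\mathcal{C}^J_w$ terminate at the \emph{same} sink job $J_{n,q_n}$, I would evaluate both lengths under the virtual offset/deadline assignments realizing $\boldsymbol{GLP}_u$ and $\boldsymbol{GLP}_w$ (written $O^u_i,D^u_i$ and $O^w_i,D^w_i$) and split the difference into a sink term and a source term:
\[
\text{Len}(\mathcal{C}^J_w)-\text{Len}(\mathcal{C}^J_u)=\bigl(wr^{w}_{J_{n,q_n}}-wr^{u}_{J_{n,q_n}}\bigr)-\bigl(re^{w}_{J_{0,q^w_0}}-re^{u}_{J_{0,q^u_0}}\bigr).
\]
Using Definition~\ref{def_virtual_offset} and Definition~\ref{def_virtual_deadline}, the sink term equals $D^w_n-D^u_n$ and the source term equals $(O^w_0-O^u_0)+(q^w_0-q^u_0)T_0$.

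I would then bound the two terms independently. For the sink term, the schedulability constraints~\eqref{schedulability_test_fLET} together with $D_n^{org}\le T_n$ give $D^u_n\ge 0$ and $D^w_n\le D_n^{org}\le T_n$, hence $D^w_n-D^u_n\le T_n$. For the source term, \thmref{theorem_DA_chain} applied to the sink job $J_{n,q_n}$ supplies the index monotonicity $q^u_0\le q^w_0$, so $(q^w_0-q^u_0)T_0\ge 0$; combined with $O^w_0\ge 0$ and $O^u_0\le D^u_0\le D_0^{org}\le T_0$, this yields $(O^w_0-O^u_0)+(q^w_0-q^u_0)T_0\ge -T_0$, so that its subtracted contribution is at most $T_0$. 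Adding the two bounds gives $\text{Len}(\mathcal{C}^J_w)-\text{Len}(\mathcal{C}^J_u)\le T_n+T_0$, so that one may take $B_\mathcal{C}$ to be the realized difference and conclude $B_\mathcal{C}\le T_n+T_0$.

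The crux, and the reason the bound is $T_n+T_0$ rather than $0$, is that $\boldsymbol{GLP}_u$ and $\boldsymbol{GLP}_w$ are in general realized by \emph{different} offset/deadline assignments, so the sink writing times do not cancel and one cannot simply use $q^u_0\le q^w_0$ to claim $\text{Len}(\mathcal{C}^J_w)\le\text{Len}(\mathcal{C}^J_u)$. The care therefore lies in (i) observing that only the source and sink boundary quantities survive, so that no quantity which accumulates along the chain (and would otherwise scale with the chain length $n$) ever appears, and (ii) pinning each surviving boundary term to a single period through the feasibility window $0\le O_i\le D_i\le D_i^{org}\le T_i$: the sink deadline can shift by at most one $T_n$, and, once \thmref{theorem_DA_chain} fixes the direction of the source index, the source read can shift by at most one $T_0$. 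This is consistent with the intuition of Fig.~\ref{let_skip_example_fig}, where locking the sink deadline eliminates the $T_n$ contribution and leaves only the $T_0$ term controlled by the pattern index.
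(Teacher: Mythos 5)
Your proof is correct and follows essentially the same route as the paper's: decompose the length difference into a sink-writing term and a source-reading term, bound the sink term by $T_n$ via the feasibility window $0 \le D_n \le D_n^{org} \le T_n$, and use the index monotonicity $q_0^u \le q_0^w$ from \thmref{theorem_DA_chain} together with $0 \le O_0 \le T_0$ to bound the source contribution by $T_0$. The only (cosmetic) difference is that you bound the source term uniformly in terms of $O_0$ and $(q_0^w-q_0^u)T_0$, whereas the paper splits into the cases $q_0^u < q_0^w$ and $q_0^u = q_0^w$; your explicit reduction to the virtual offset/deadline variables is, if anything, slightly more transparent.
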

\begin{proof}
Let's first introduce the related notations:
\begin{align}
  \text{Len}(\mathcal{C}^J_w) 
  = &wr^w_{J_{n,q_n}} - re^w_{J_{0,q_0^w}} \\ 
    \text{Len}(\mathcal{C}^J_u) =
    & wr^u_{J_{n,q_n}} - re^u_{J_{0,q_0^u}}\\
    \text{Len}(\mathcal{C}^J_w) -\text{Len}(\mathcal{C}^J_u) 
    = &(wr^w_{J_{n,q_n}}-wr^u_{J_{n,q_n}}) + \\ 
    &-(re^w_{J_{0,q_0^w}} - re^u_{J_{0,q_0^u}})
\end{align}
    Since $\boldsymbol{GLP}_u$ and $\boldsymbol{GLP}_w$ are feasible and $q_0^u \leq q_0^w$, we know $re^u_{J_{0, q_0^u}} < re^w_{J_{0, q_0^w}}$ if $q_0^u < q_0^w$, 
    and $|re^w_{J_{0, q_0^w}} - re^u_{J_{0, q_0^u}} | \leq T_0$ if $q_0^u = q_0^w$. Similarly, $wr^w_{J_{n,q_n}}-wr^u_{J_{n,q_n}} \leq T_n$. Therefore, $\text{Len}(\mathcal{C}^J_w) -\text{Len}(\mathcal{C}^J_u) \leq T_n+T_0$.
\end{proof}

\begin{theorem}
\label{theorem_bound_single_chain}
Continue with Theorem~\ref{theorem_DA_chain},  the worst-case data age of $\boldsymbol{GLP}_u$ cannot be smaller than $\boldsymbol{GLP}_w$ by more than $B_{\mathcal{C}}$.
\end{theorem}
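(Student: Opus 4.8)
The plan is to reduce the statement to a pointwise comparison of job-chain lengths and then invoke \lemref{theorem_compare_mn_bound}. Recall that, under the setup of \thmref{theorem_DA_chain}, the worst-case data age of the cause-effect chain $\mathcal{C}$ under a given graph last-reading pattern is by definition the \emph{maximum} length over all its immediate backward job chains, i.e. the maximum of $\text{Len}(\mathcal{C}^J)$ taken over all terminal jobs $J_{n,q_n}$ (within the chain's hyper-period). Writing $A_u$ and $A_w$ for the worst-case data ages of $\boldsymbol{GLP}_u$ and $\boldsymbol{GLP}_w$ respectively, the goal is precisely to show $A_w - A_u \leq B_{\mathcal{C}}$.

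The key step is to fix the terminal job that realizes the maximum for $\boldsymbol{GLP}_w$: let $J_{n,q_n^\star}$ be a terminal job with $A_w = \text{Len}(\mathcal{C}^J_w)$, where $\mathcal{C}^J_w$ is the immediate backward job chain terminating at $J_{n,q_n^\star}$ under $\boldsymbol{GLP}_w$. Crucially, I do not need the same terminal job to realize the maximum for $\boldsymbol{GLP}_u$; it suffices to use the one-sided inequality $A_u \geq \text{Len}(\mathcal{C}^J_u)$, where $\mathcal{C}^J_u$ is the immediate backward job chain terminating at the \emph{same} job $J_{n,q_n^\star}$ under $\boldsymbol{GLP}_u$. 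This holds because $A_u$ is a maximum over all terminal jobs and $\text{Len}(\mathcal{C}^J_u)$ is one of the quantities over which that maximum is taken.

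Finally, since $\mathcal{C}^J_u$ and $\mathcal{C}^J_w$ terminate at the same job $J_{n,q_n^\star}$ and $\boldsymbol{GLP}_u \preccurlyeq \boldsymbol{GLP}_w$, \lemref{theorem_compare_mn_bound} applies directly and yields $\text{Len}(\mathcal{C}^J_w) - \text{Len}(\mathcal{C}^J_u) \leq B_{\mathcal{C}}$. Chaining the inequalities then gives $A_w - A_u \leq \text{Len}(\mathcal{C}^J_w) - \text{Len}(\mathcal{C}^J_u) \leq B_{\mathcal{C}}$, which is exactly the claim. The only subtle point, and the step most likely to be mishandled, is the asymmetry in how the two worst-case values are used: $A_w$ is invoked as an equality at its own maximizer, while $A_u$ is used only as a lower bound at that same maximizer. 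Getting this direction right is what lets the bound go through without requiring the two patterns to share a common worst-case job chain.
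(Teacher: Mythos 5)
Your proof is correct and is essentially the paper's own argument: both fix the terminal job whose immediate backward job chain is longest under $\boldsymbol{GLP}_w$, bound the length difference at that job via Lemma~\ref{theorem_compare_mn_bound}, and use the worst-case data age of $\boldsymbol{GLP}_u$ only as a one-sided lower bound at that same job (the paper expresses this with the nonnegative slack $\Delta = \textbf{U}_{k_u} - \textbf{U}_{k_w}$, you express it directly as $A_u \geq \text{Len}(\mathcal{C}^J_u)$). Incidentally, where the paper's proof cites Theorem~\ref{theorem_DA_chain} for the per-job bound it means Lemma~\ref{theorem_compare_mn_bound}; your citation is the right one.
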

\begin{proof}
    Let's use two vectors $\textbf{U}$ and $\textbf{W}$ to denote the length of the immediate backward job chains in $\boldsymbol{GLP}_u$ and $\boldsymbol{GLP}_w$, respectively. For any job $J_{n, q_n}$, let's use $\textbf{U}_{q_n}$ and $\textbf{W}_{q_n}$ to denote the length of the job chains that terminate at $J_{n, q_n}$. Following Lemma~\ref{theorem_DA_chain}, we have $\textbf{W}_{q_n} - \textbf{U}_{q_n} \leq B_{\mathcal{C}}$.
    Next, let's denote the index of the longest job chain in $\textbf{U}$ and $\textbf{W}$ as $k_u$ and $k_w$. Then we have $\Delta = \textbf{U}_{k_u} - \textbf{U}_{k_w} \geq 0$. Therefore,
    \begin{equation}
        \textbf{W}_{k_w}-\textbf{U}_{k_u} = \textbf{W}_{k_w}-(\textbf{U}_{k_w} + \Delta ) \leq  B_{\mathcal{C}} - \Delta \leq  B_{\mathcal{C}}
    \end{equation}
\end{proof}
\begin{theorem}
\label{theorem_bound_multi_chain}
    Consider a DAG with multiple cause-effect chains, then Theorem~\ref{theorem_DA_chain} and~\ref{theorem_bound_single_chain} hold for each cause-effect chain.
\end{theorem}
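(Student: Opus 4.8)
The plan is to observe that \thmref{theorem_DA_chain} and \thmref{theorem_bound_single_chain} were established for a single chain using only (i) the edge-wise ordering $\boldsymbol{GLP}_u \preccurlyeq \boldsymbol{GLP}_w$ along the edges of that chain, (ii) the feasibility of the ELPs on those edges, and (iii) the structural \lemref{lemma_two_task_Reading_pattern_inequality}. None of these ingredients reference edges outside the chain, so the entire argument should lift to the multi-chain setting by restricting attention, chain by chain, to the relevant sub-collection of edges.

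First I would fix an arbitrary cause-effect chain $\mathcal{C}\in\boldsymbol{\mathcal{C}}$ and note that its edge set is a subset of $\boldsymbol{E}$. By \defref{glp comp}, the hypothesis $\boldsymbol{GLP}_u \preccurlyeq \boldsymbol{GLP}_w$ holds edge-wise over all of $\boldsymbol{E}$; in particular it holds for every edge of $\mathcal{C}$. Thus the restriction of the two GLPs to the edges of $\mathcal{C}$ still satisfies $\preccurlyeq$, which is exactly the hypothesis the single-chain theorems require.

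Next I would address feasibility. \defref{feasibility} declares a GLP feasible when one assignment of virtual offsets and deadlines satisfies the linear inequalities of all its ELPs simultaneously. The same witness assignment \emph{a fortiori} satisfies the inequalities of the sub-collection of ELPs belonging to $\mathcal{C}$, so feasibility is inherited by the restriction. Moreover, the immediate backward job chain terminating at any sink job $J_{n,q_n}$ of $\mathcal{C}$ is traced solely along $\mathcal{C}$'s edges via the last-reading relation, so it is determined entirely by the ELPs that $\boldsymbol{GLP}_u$ and $\boldsymbol{GLP}_w$ assign to those edges.

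With these two observations in hand, \thmref{theorem_DA_chain} applies verbatim to $\mathcal{C}$, yielding $q^u_0 \leq q^w_0$ for the source-task job of the backward chain, after which \thmref{theorem_bound_single_chain} bounds the per-chain worst-case data-age gap by the chain-specific constant $B_{\mathcal{C}} \leq T_n + T_0$. Since $\mathcal{C}$ was arbitrary, both results hold for every chain. The one point that most needs care is the treatment of edges shared by several chains: a GLP fixes exactly one ELP per edge, so a shared edge contributes the identical ELP to every chain containing it, and the ordering and feasibility witnessed globally are automatically consistent across chains. This is the step I would articulate most explicitly, but it is not a genuine obstacle, since the per-chain analysis never references any edge beyond its own.
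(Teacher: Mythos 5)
Your proposal is correct and takes essentially the same approach as the paper: the paper's proof is a one-sentence observation that the per-chain derivations of the bound do not interfere with one another, and your argument simply makes that non-interference explicit (the $\preccurlyeq$ ordering and feasibility restrict to each chain's edges, backward job chains are determined solely by those edges, and shared edges carry the same ELP in every chain). No gap.
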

\begin{proof}
    Because the derivation of the performance bound of each cause-effect chain does not interfere with each other.
\end{proof}

\begin{Example}
In Example~\ref{Example_setup}, consider two graph last-reading patterns $\boldsymbol{GLP}_u = \{ \boldsymbol{ELP}_{E_0}(0),
    \boldsymbol{ELP}_{E_1}(1),
    \boldsymbol{ELP}_{E_2}(1)\}$ and  
    $\boldsymbol{GLP}_w = \{\boldsymbol{ELP}_{E_0}(0),
    \boldsymbol{ELP}_{E_1}(1),
    \boldsymbol{ELP}_{E_2}(0)\}$. 
    We have $\boldsymbol{GLP}_u \preccurlyeq \boldsymbol{GLP}_w$ following \defref{glp comp}. 
    Hence, based on Theorem~\ref{theorem_bound_single_chain},
    \glpevalOneindex{w} (\ie the data age after optimization) is guaranteed to be smaller or no larger than $B_{\mathcal{C}}$ (Lemma~\ref{theorem_compare_mn_bound}) when compared to \glpevalOneindex{u}. 
\end{Example}

\subsection{Skip Partial Graph Communication Patterns}
Next, we generalize Theorem~\ref{theorem_DA_chain} to partial graph communicating patterns.
When evaluating a partial GLP's data age, we only consider the partial cause-effect chains formulated by the edges contained in the partial GLP. If the partial cause-effect chain does not include the source task of the complete chain, we consider its data age to be 0.
\begin{Example}
Consider a cause-effect chain 
$\tau_0 \rightarrow \tau_1 \rightarrow \tau_2$ 
and two partial GLPs $\boldsymbol{pGLP}_1=\{ \boldsymbol{ELP}_{E_0}(1) \}$ and $\boldsymbol{pGLP}_2=\{\boldsymbol{ELP}_{E_1}(2) \}$,
then the data age evaluation of $\boldsymbol{pGLP}_1$ only considers the cause-effect chain $\tau_0 \rightarrow \tau_1$. The data age of $\boldsymbol{pGLP}_2$ is 0 because it does not contain the source task $\tau_0$.
\end{Example}

\begin{theorem}
\label{theorem_partial_dart_chain}
    Theorem~\ref{theorem_DA_chain} applies to partial graph communication patterns.
\end{theorem}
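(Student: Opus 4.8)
The plan is to reduce the claim to a direct application of \thmref{theorem_DA_chain} to the source-containing contiguous sub-chain induced by the edges present in the partial GLP. First I would recall how data age is evaluated on a partial GLP: only the partial cause-effect chains formed by the edges actually contained in $\boldsymbol{pGLP}$ are considered, and any such segment that omits the source task $\tau_0$ contributes data age $0$ by definition. Hence the only case with nontrivial content is a contiguous segment $\tau_0 \rightarrow \dots \rightarrow \tau_m$ (with $m \le n$) whose edges $E_0,\dots,E_{m-1}$ all belong to the partial GLP.

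Next I would argue that the two partial GLPs, restricted to these edges, form genuine complete graph last-reading patterns for the subgraph consisting of the path $\tau_0 \rightarrow \dots \rightarrow \tau_m$. Feasibility is inherited: a feasible partial GLP satisfies the ELP inequalities of all its edges (\defref{feasibility}), and discarding the constraints of edges outside the segment can only enlarge the solution set, so the restriction stays feasible. Likewise $\boldsymbol{pGLP}_u \preccurlyeq \boldsymbol{pGLP}_w$ means $\boldsymbol{pGLP}_u(E_k) \preccurlyeq \boldsymbol{pGLP}_w(E_k)$ for every $k \in \{0,\dots,m-1\}$ (\defref{glp comp}), so the restrictions still compare under $\preccurlyeq$. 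With these two facts in hand, \thmref{theorem_DA_chain} applies verbatim to the path and yields $q_0^u \le q_0^w$ for the job of $\tau_0$ in the immediate backward job chain terminating at any $J_{m,q_m}$.

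The point that needs care --- and what I expect to be the only real obstacle --- is confirming that the immediate backward job chain traced inside the partial GLP, once restricted to the covered segment, coincides with the backward chain of the restricted sub-pattern, so that the two evaluations refer to the same object. This holds because the last-reading relationship on an edge $\tau_{i-1} \rightarrow \tau_i$ depends only on $O_i$, $D_{i-1}$, and the ELP chosen for that edge (\defref{last_reading_job} and \eqref{last_reading_job_ineq}), hence on no edge outside the segment; the recursion building the backward chain therefore consults only ELPs present in $\boldsymbol{pGLP}$ and terminates exactly at $\tau_0$. Equivalently, the induction underlying \thmref{theorem_DA_chain} is local --- each step at $\tau_{i-1}$ invokes only \lemref{lemma_two_task_Reading_pattern_inequality} and \defref{elp comp} for the current edge together with the index inequality already secured at $\tau_i$ --- so it runs back only as far as the covered edges reach. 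For any segment that does not reach $\tau_0$ the evaluated data age is $0$ in both patterns, so monotonicity holds trivially, and combining the two cases establishes that \thmref{theorem_DA_chain} carries over to partial graph communication patterns.
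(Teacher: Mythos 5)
Your proof is correct and matches the paper's approach: the paper actually omits this proof entirely, stating only that it is ``similar to Theorem~\ref{theorem_DA_chain},'' and your argument---restricting both partial patterns to the source-containing covered segment, checking that feasibility and the $\preccurlyeq$ relation are inherited by the restriction, and then invoking Theorem~\ref{theorem_DA_chain} on that sub-chain (with uncovered segments contributing data age $0$ trivially)---is precisely the argument being alluded to, made rigorous. The one point genuinely requiring care, that the immediate backward job chain is built edge-locally from the ELPs so that restriction does not alter the traced chain, is exactly the point you identify and justify correctly.
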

\begin{proof}
    Skipped due to being similar to Theorem~\ref{theorem_DA_chain}.
\end{proof}

\begin{definition}[Contain, $\subset$]
\label{def: contain}
    Consider two partial GLPs $\boldsymbol{pGLP}_u$ and $\boldsymbol{pGLP}_w$. We say $\boldsymbol{pGLP}_w$ contains $\boldsymbol{pGLP}_u$, denoted as $\boldsymbol{pGLP}_u \subset \boldsymbol{pGLP}_w$, if:
    \begin{equation*}
        \forall \boldsymbol{pGLP}_u(E_k) \in \boldsymbol{pGLP}_u,  \ \boldsymbol{pGLP}_u(E_k)=\boldsymbol{pGLP}_w(E_k)
    \end{equation*}
\end{definition}
\begin{Example}
    Consider three partial graph last-reading patterns $\boldsymbol{pGLP}_u = \{ \boldsymbol{ELP}_{E_0}(0)\}$, $\boldsymbol{pGLP}_w = \{ \boldsymbol{ELP}_{E_0}(0),$
    $\boldsymbol{ELP}_{E_1}(1)\}$ and $\boldsymbol{pGLP}_k = \{ \boldsymbol{ELP}_{E_0}(1)\}$. Then $\boldsymbol{pGLP}_w$ contains $\boldsymbol{pGLP}_u$, but does not contain $\boldsymbol{pGLP}_k$.
\end{Example}

\begin{theorem}
\label{incomplete_chain_worse_theorem}
    Consider two sub-chains $\mathcal{C}_{1}=\tau_0 \rightarrow ... \rightarrow \tau_k$ and $\mathcal{C}_{2}=\tau_{k+1} \rightarrow ... \rightarrow \tau_n$ of a cause-effect chain $\mathcal{C}=\mathcal{C}_{1} \rightarrow \mathcal{C}_{2}$. 
    Given the same task set schedule, the worst-case data age of $\mathcal{C}_{1}$ is no larger than $\mathcal{C}$.
\end{theorem}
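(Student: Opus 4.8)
The plan is to reduce the claim to a statement about individual job chains. Since the worst-case data age of a chain equals the length of its longest immediate backward job chain (via \defref{last_reading_job} and Definition~\ref{def_length_job_chain}), it suffices to show that for the longest immediate backward job chain of $\mathcal{C}_1$ there is an immediate backward job chain of the full chain $\mathcal{C}$ that is no shorter; the theorem then follows by applying this to the maximizing $\mathcal{C}_1$ chain. Concretely, I would fix the shared schedule, let the longest immediate backward job chain of $\mathcal{C}_1$ terminate at $J_{k,q^*}$ with source $J_{0,q_0^*}$ so that its length is $\ell = wr_{J_{k,q^*}} - re_{J_{0,q_0^*}}$, and write $\alpha(q)$ for the $\tau_0$-index reached from $J_{k,q}$ by following last-reading jobs inside $\mathcal{C}_1$ (so $q_0^* = \alpha(q^*)$). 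The target is a chain of $\mathcal{C}$ whose source read time is at most $re_{J_{0,q_0^*}}$ and whose sink on $\tau_n$ writes no earlier than $wr_{J_{k,q^*}}$.

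In the generic case this is immediate. If $J_{k,q^*}$ is the last-reading job of some job of $\tau_{k+1}$, I would prepend the fixed $\tau_0 \to \cdots \to \tau_k$ segment to the immediate backward segment $\tau_k \to \cdots \to \tau_n$ obtained by following last-reading jobs downstream from that job of $\tau_{k+1}$. The result is a single immediate backward job chain of $\mathcal{C}$; its source is still $J_{0,q_0^*}$, and because within any job chain each job reads no earlier than its predecessor writes and writes no earlier than it reads, the sink on $\tau_n$ writes no earlier than $wr_{J_{k,q^*}}$. Hence its length is at least $\ell$, which settles this case.

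The main obstacle is \emph{under-sampling}: $J_{k,q^*}$ may not be the last-reading job of any job of $\tau_{k+1}$ (its output is overwritten before $\tau_{k+1}$ next reads), so the extension above is not a valid immediate backward job chain. To handle this I would select the sink job $J_{n,p}$ of $\mathcal{C}$ through its induced source map: following last-reading jobs from $\tau_n$ back to $\tau_k$ and then to $\tau_0$ defines, via \lemref{lemma_two_task_Reading_pattern_inequality} composed along the chain, a non-decreasing map $p \mapsto \alpha(\psi(p))$ from the sink index $p$ to the reached $\tau_0$-source index (here $\psi(p)$ is the $\tau_k$-landing index). I would take the largest $p$ whose source index does not exceed $q_0^*$, which makes the source read time at most $re_{J_{0,q_0^*}}$; it then remains to show the sink write time is at least $wr_{J_{k,q^*}}$. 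This is the delicate quantitative step: by maximality the next sink job $J_{n,p+1}$ traces strictly past $J_{0,q_0^*}$, so monotonicity of $\alpha$ forces $\psi(p+1) > q^*$ and pins $re_{J_{n,p+1}}$ (hence $re_{J_{n,p}} = re_{J_{n,p+1}} - T_n$) from below; reconciling this bound with $wr_{J_{k,q^*}}$ requires carefully combining the spacing $wr_{J_{k,q+1}} - wr_{J_{k,q}} = T_k$, the causality $wr_{\mathrm{source}} \le re_{\mathrm{sink}}$ inside a job chain, and the two monotone shifts, so that the write gain from the downstream extension through $\tau_{k+1},\dots,\tau_n$ always dominates the forced advance of the source. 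I expect this balancing argument — not the reduction or the generic embedding — to be where the real work lies, and the symmetric statement for reaction time would follow by the dual use of first-reacting jobs (\defref{first_reacting_job}, \lemref{lemma_two_task_Reacting_pattern_inequality}).
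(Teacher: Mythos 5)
Your route is genuinely different from the paper's: the paper gives no construction at all, it simply appeals to the compositional theorem of G{\"u}nzel~\etal~\cite{Gnzel2021TimingAO}. Your direct construction is set up correctly (the reduction to job chains, the monotone maps obtained by composing Lemma~\ref{lemma_two_task_Reading_pattern_inequality} along the chain, and the choice of the largest sink index $p$ whose induced source does not pass $J_{0,q_0^*}$), but there is a genuine gap exactly where you say ``the real work lies,'' and that gap cannot be closed. From your setup the strongest derivable bound is: $\psi(p+1)\ge q^*+1$ gives $re_{J_{n,p+1}}\ge wr_{J_{k,q^*+1}}=wr_{J_{k,q^*}}+T_k$, hence $wr_{J_{n,p}}\ge re_{J_{n,p+1}}-T_n\ge wr_{J_{k,q^*}}+T_k-T_n$, which is vacuous whenever $T_n>T_k$; no balancing argument recovers $wr_{J_{n,p}}\ge wr_{J_{k,q^*}}$. (Your ``generic case'' is also not immediate as written: under-sampling can occur at every downstream edge, not only at $\tau_k\rightarrow\tau_{k+1}$, and ``following last-reading jobs downstream'' is not a well-defined operation; but this is subsumed by your second case.)

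The deeper issue is that the statement is false under the paper's own Definition~\ref{def_length_job_chain} of job-chain length, so no construction can succeed. Take $\mathcal{C}=\tau_0\rightarrow\tau_1\rightarrow\tau_2$ and $\mathcal{C}_1=\tau_0\rightarrow\tau_1$ with $(T_0,O_0,D_0)=(10,0,1)$, $(T_1,O_1,D_1)=(2,1,2)$, $(T_2,O_2,D_2)=(5,2,3)$, which satisfies all fLET constraints for small WCETs. The immediate backward job chain $J_{0,0}\rightarrow J_{1,4}$ has length $wr_{J_{1,4}}-re_{J_{0,0}}=10-0=10$, so the worst-case data age of $\mathcal{C}_1$ is $10$; but every immediate backward job chain of $\mathcal{C}$ has length at most $8$ (the longest ones are of the form $J_{0,0}\rightarrow J_{1,2}\rightarrow J_{2,1}$ with $wr_{J_{2,1}}-re_{J_{0,0}}=8-0=8$), because the critical job $J_{1,4}$, writing at $10$, is never the last-reading job of any job of $\tau_2$, which reads at $2,7,12,17,\dots$. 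The theorem does hold under the data-age definition of the cited work~\cite{Gnzel2021TimingAO}, where an output's age extends until the next output of the chain overwrites it: that definition adds exactly $T_n$ to the full-chain quantity and $T_k$ to the prefix quantity, so your bound $\ell+T_k-T_n$ becomes $\ell+T_k$ and the argument closes with no slack. In other words, your plan is completable only after switching to that validity-window definition --- which is implicitly what the paper's one-line appeal to the compositional theorem does, and is precisely the definitional mismatch your attempt exposes.
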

\begin{proof}
    This theorem can be proved following the compositional theorem and its generalization in G{\"u}nzel~\etal~\cite{Gnzel2021TimingAO}.
\end{proof}

\begin{theorem}
\label{theorem_partial_gcp}
Consider two partial GLPs $\boldsymbol{pGLP}_u \subset \boldsymbol{pGLP}_w$, the maximum data age evaluated in $\boldsymbol{pGLP}_w$ (following \defref{glp evaluate}) is no less than $\boldsymbol{pGLP}_u$.
\end{theorem}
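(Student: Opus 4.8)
The plan is to decompose the effect of enlarging a partial pattern into two monotone pieces and then chain them. Passing from $\boldsymbol{pGLP}_u$ to the larger pattern $\boldsymbol{pGLP}_w$ can raise the evaluated data age for two independent reasons: $\boldsymbol{pGLP}_w$ carries strictly more ELP inequalities, which can only shrink the feasible set, and it exposes longer counted (prefix) chains, whose data age dominates by Theorem~\ref{incomplete_chain_worse_theorem}. I would establish each piece separately and then combine them.

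First, notation and feasible-region inclusion. For a feasible assignment $(\boldsymbol{O},\boldsymbol{D})$ let $g_u(\boldsymbol{O},\boldsymbol{D})$ and $g_w(\boldsymbol{O},\boldsymbol{D})$ denote the partial-GLP objectives (the sum over cause-effect chains of the maximum data age of the counted partial chain) for $\boldsymbol{pGLP}_u$ and $\boldsymbol{pGLP}_w$, and let $\mathcal{R}_u,\mathcal{R}_w$ be their feasible sets (the schedulability constraints~\eqref{schedulability_constraint_gcp_eval} together with the ELP inequalities of the included edges). Because $\boldsymbol{pGLP}_u \subset \boldsymbol{pGLP}_w$ (Definition~\ref{def: contain}), every edge of $\boldsymbol{pGLP}_u$ occurs in $\boldsymbol{pGLP}_w$ with an identical ELP, so the inequality system defining $\mathcal{R}_w$ contains that defining $\mathcal{R}_u$; hence $\mathcal{R}_w \subseteq \mathcal{R}_u$.

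Second, pointwise domination $g_u(\boldsymbol{O},\boldsymbol{D}) \le g_w(\boldsymbol{O},\boldsymbol{D})$ for every $(\boldsymbol{O},\boldsymbol{D}) \in \mathcal{R}_w$. Fixing such an assignment determines all reading and writing times, i.e.\ a single schedule shared by both patterns. For each chain $\mathcal{C}$, the counted partial chain under a partial pattern is the maximal prefix of $\mathcal{C}$ from the source whose edges all belong to that pattern (a partial chain missing the source contributes data age $0$). Since the edge set of $\boldsymbol{pGLP}_u$ is contained in that of $\boldsymbol{pGLP}_w$, the $u$-prefix is a contiguous sub-chain of the $w$-prefix. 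When the $u$-prefix is nonempty I apply Theorem~\ref{incomplete_chain_worse_theorem} under this common schedule, taking the $u$-prefix as $\mathcal{C}_1$ and the $w$-prefix as $\mathcal{C}$, obtaining that the worst-case data age of the $u$-prefix is no larger; when the $u$-prefix is empty its contribution $0$ is trivially no larger than the nonnegative data age of the $w$-prefix. Summing these per-chain inequalities over $\boldsymbol{\mathcal{C}}$ yields $g_u \le g_w$ on $\mathcal{R}_w$.

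Finally I combine the pieces. Let $(\boldsymbol{O}^\star,\boldsymbol{D}^\star)$ attain $\hat{\mathcal{F}}_{\boldsymbol{pGLP}_w}=g_w(\boldsymbol{O}^\star,\boldsymbol{D}^\star)$. Since $(\boldsymbol{O}^\star,\boldsymbol{D}^\star)\in\mathcal{R}_w\subseteq\mathcal{R}_u$, it is feasible for $\boldsymbol{pGLP}_u$, whence $\hat{\mathcal{F}}_{\boldsymbol{pGLP}_u}\le g_u(\boldsymbol{O}^\star,\boldsymbol{D}^\star)\le g_w(\boldsymbol{O}^\star,\boldsymbol{D}^\star)=\hat{\mathcal{F}}_{\boldsymbol{pGLP}_w}$, where the middle step is the pointwise domination just proved. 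The main obstacle I anticipate is the second step: one must argue carefully that the chain counted under $\boldsymbol{pGLP}_u$ is a genuine prefix sub-chain of the one counted under $\boldsymbol{pGLP}_w$ (rather than a possibly disconnected edge subset), since only then is the compositional inequality of Theorem~\ref{incomplete_chain_worse_theorem} applicable; the feasible-set inclusion and the closing chain of inequalities are routine.
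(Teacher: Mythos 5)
Your proof is correct and follows essentially the same route as the paper's: both rest on Theorem~\ref{incomplete_chain_worse_theorem} applied to the counted prefix chains under a common assignment. The paper's own proof is a single sentence citing that theorem, while your two extra steps---the feasible-region inclusion $\mathcal{R}_w \subseteq \mathcal{R}_u$ and the chain $\hat{\mathcal{F}}_{\boldsymbol{pGLP}_u} \le g_u(\boldsymbol{O}^\star,\boldsymbol{D}^\star) \le g_w(\boldsymbol{O}^\star,\boldsymbol{D}^\star) = \hat{\mathcal{F}}_{\boldsymbol{pGLP}_w}$ through the optimizer of $\boldsymbol{pGLP}_w$---merely make explicit the optimization-level details the paper leaves implicit, since the evaluation in \defref{glp evaluate} is a minimum over feasible assignments rather than a pointwise quantity.
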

\begin{proof}
    Direct results of Theorem~\ref{incomplete_chain_worse_theorem} because the cause-effect chains in $\boldsymbol{pGLP}_w$ have more tasks.
\end{proof}

\subsection{DART-specific Symbolic Optimization Algorithm}
\label{section_dart_symbopt}
The theorems above can be summarized as follows:
\begin{corollary}
\label{theormSummary}
Consider an optimization problem that minimizes data age.
A partial GLP $\boldsymbol{pGLP}_i$ and all the other partial GLPs that contain $\boldsymbol{pGLP}_i$ can be skipped if one of the evaluated partial GLP $\boldsymbol{pGLP}_j$ satisfies (\cref{theorem_partial_dart_chain,incomplete_chain_worse_theorem,theorem_partial_gcp}): 
\begin{itemize}
    \item $\boldsymbol{pGLP}_j$'s evaluation is smaller than $\boldsymbol{pGLP}_i$;
    \item $\boldsymbol{pGLP}_i$ is smaller than $\boldsymbol{pGLP}_j$ (\defref{glp comp});
\end{itemize} 
\end{corollary}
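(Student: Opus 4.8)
The plan is to recast the statement as a \emph{no-loss dominance} claim: I will show that under the two bullet conditions, every complete GLP obtainable by extending $\boldsymbol{pGLP}_i$ is matched by an at-least-as-good complete GLP obtainable by extending the already-evaluated $\boldsymbol{pGLP}_j$, so discarding the entire subtree rooted at $\boldsymbol{pGLP}_i$ cannot delete an optimizer. Since $\boldsymbol{pGLP}_i$ and $\boldsymbol{pGLP}_j$ differ on at least one edge's ELP, $\boldsymbol{pGLP}_j$ is never contained in $\boldsymbol{pGLP}_i$ (\defref{contain}), so the surviving witness still lies on a branch the search explores. Concretely, I would fix an arbitrary completion $G_i$ with $\boldsymbol{pGLP}_i \subset G_i$, write $P = G_i \setminus \boldsymbol{pGLP}_i$ for the suffix edges added toward the sink, and form the sibling completion $G_j = \boldsymbol{pGLP}_j \cup P$; the goal reduces to proving $\hat{\mathcal{F}}_{G_j} \le \hat{\mathcal{F}}_{G_i}$ for every such $G_i$.

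First I would observe that $G_i$ and $G_j$ agree on every edge of $P$ and obey $\boldsymbol{pGLP}_i \preccurlyeq \boldsymbol{pGLP}_j$ elsewhere, so \defref{glp comp} yields $G_i \preccurlyeq G_j$. Applying \cref{theorem_partial_dart_chain} (the partial-GLP form of \cref{theorem_DA_chain}) to $G_i \preccurlyeq G_j$ propagates the ordering along the chain: for every sink job $J_{n,q_n}$, the source job reached by $G_i$'s immediate backward job chain carries an index no larger than the one reached by $G_j$, hence an earlier-or-equal read time $re = O_0 + q_0 T_0$. I would then split the length $wr_{J_{n,q_n}} - re_{J_{0,q_0}}$ at the shared boundary task $\tau_k$ into a suffix term routed through $P$ (reaching some $J_{k,q_k}$, with $q_k$ fixed by $P$ and the schedule) and a prefix term that is exactly the prefix-chain data age for boundary job $J_{k,q_k}$. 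For any schedule the suffix term is identical for $G_i$ and $G_j$, while \cref{theorem_partial_dart_chain} makes the prefix term for $G_j$ no larger; \cref{incomplete_chain_worse_theorem} certifies that this additive composition along the sub-chains is legitimate, and \cref{theorem_partial_gcp} confirms that passing from $\boldsymbol{pGLP}_j$ to its completion only monotonically appends the suffix contribution.

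The main obstacle is that \defref{glp evaluate} is a \emph{minimum} over $(\boldsymbol{O},\boldsymbol{D})$, and distinct ELPs occupy \emph{disjoint} regions of that space (each ELP pins the producer--consumer offset $O_j - D_i$ to a half-open interval, so the regions of $G_i$ and $G_j$ never overlap). Consequently the optimal schedule of $G_i$ cannot be reused for $G_j$, and the pointwise prefix inequality above does not transfer to the two minima for free. This is precisely the gap the first bullet, $\hat{\mathcal{F}}_{\boldsymbol{pGLP}_j} \le \hat{\mathcal{F}}_{\boldsymbol{pGLP}_i}$, is meant to close: it certifies that, after re-optimizing over its own region, the prefix contribution on $\boldsymbol{pGLP}_j$ is already no larger than on $\boldsymbol{pGLP}_i$, so intersecting each prefix region with the common suffix region of $P$ and minimizing the decomposed objective leaves the ordering intact. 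The delicate step I expect to spend the most care on is the coupling at $\tau_k$, whose virtual offset and deadline are shared by the prefix and suffix subproblems: I must argue that re-optimizing the suffix over $G_j$'s region cannot re-inflate the prefix gap that the first bullet guarantees, which is the genuine crux of the combination of \cref{theorem_partial_dart_chain,incomplete_chain_worse_theorem,theorem_partial_gcp}. Once $\hat{\mathcal{F}}_{G_j}\le\hat{\mathcal{F}}_{G_i}$ holds for arbitrary $G_i$, the whole subtree of $\boldsymbol{pGLP}_i$ is dominated and may be skipped without sacrificing optimality.
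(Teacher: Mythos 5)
Your proposal proves the wrong statement, and the stronger statement you aim at is false. You recast the corollary as a \emph{lossless} pruning rule --- ``discarding the entire subtree rooted at $\boldsymbol{pGLP}_i$ cannot delete an optimizer'' --- but the paper never claims that, and its own \thmref{theorem_skip_perf_bound} exists precisely because it is untrue: skips triggered by the second bullet ($\boldsymbol{pGLP}_i \preccurlyeq \boldsymbol{pGLP}_j$) can discard the true optimum, and the algorithm therefore only guarantees a final suboptimality of $\sum_{\mathcal{C}} B_{\mathcal{C}}$. The paper supplies no separate proof of the corollary at all; it is a summary whose soundness rests on the bounded-loss chain: for any pattern $\boldsymbol{G}$ containing $\boldsymbol{pGLP}_i$, \thmref{theorem_partial_gcp} gives $\hat{\mathcal{F}}_{\boldsymbol{G}} \geq \hat{\mathcal{F}}_{\boldsymbol{pGLP}_i}$, and \thmref{theorem_partial_dart_chain} with \lemref{theorem_compare_mn_bound} and \thmref{theorem_bound_single_chain} (extended per chain by \thmref{theorem_bound_multi_chain}) gives $\hat{\mathcal{F}}_{\boldsymbol{pGLP}_i} \geq \hat{\mathcal{F}}_{\boldsymbol{pGLP}_j} - B_{\mathcal{C}}$, so skipping costs at most $B_{\mathcal{C}} \leq T_n + T_0$ per chain --- not zero.

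Two concrete steps in your argument fail. First, your witness $\boldsymbol{G}_j = \boldsymbol{pGLP}_j \cup P$ need not be feasible: feasibility is a \emph{joint} LP condition over all ELP inequalities (\defref{feasibility}), and feasibility of $\boldsymbol{G}_i$ together with feasibility of $\boldsymbol{pGLP}_j$ does not imply feasibility of their mixture (your construction also tacitly assumes $\boldsymbol{pGLP}_i$ and $\boldsymbol{pGLP}_j$ cover the same edges, which \defref{glp comp} and the algorithm do not guarantee). Second, even granting feasibility, the cited theorems deliver only an ordering of job \emph{indices} ($q_0^u \leq q_0^w$ in \thmref{theorem_DA_chain}), never a pointwise inequality on chain lengths: as you yourself observe, the two patterns occupy disjoint $(\boldsymbol{O},\boldsymbol{D})$ regions, so the optimal reading/writing instants shift between them by up to $T_0$ and $T_n$ in \emph{either} direction, and \lemref{theorem_compare_mn_bound} caps the resulting discrepancy at $B_{\mathcal{C}}$ without fixing its sign --- your claim that the ``prefix term for $\boldsymbol{G}_j$ is no larger'' has no support. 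Your attempt to close this with the first bullet cannot work either: $\hat{\mathcal{F}}_{\boldsymbol{pGLP}_j} \leq \hat{\mathcal{F}}_{\boldsymbol{pGLP}_i}$ compares two quantities that are merely \emph{lower bounds} (via \thmref{theorem_partial_gcp}) on the costs of the respective completions, and a comparison of lower bounds yields no upper bound on $\hat{\mathcal{F}}_{\boldsymbol{G}_j}$. The ``delicate step'' you flag at the boundary task is thus not a technicality to be handled with more care; it is exactly where lossless dominance breaks, and the correct conclusion is the weaker, bounded-suboptimality reading the paper formalizes in \thmref{theorem_skip_perf_bound}.
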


\noindent \textbf{ELP Enumeration Order.} 
ELPs need to be ordered appropriately to fully utilize the symbolic operations.
When selecting edges to add to GLPs, select the edges that are closer to the source tasks first. 
After selecting an edge, first add its largest ELP (follow \defref{elp comp}) into the GLP when minimizing data age (smallest EFP first for reaction time optimization). 

\noindent \textbf{Pseudocode.} The pseudocode for data age minimization problem is shown in Algorithm~\ref{alg_opt_skipped}.
The vector $\textbf{best\_yet\_obj}$ denotes the worst-case data age of each cause-effect chain evaluated by the best-known GLP. 
In \alglineref{GetBestPossibleObj}, the implementation of the function $\textbf{GetBestPossibleObj}$ follows \defref{glp evaluate} except only considering the sub-chains contained in $\textbf{P}_{ite}$.
The $\geq$ operator in \alglineref{compareBestYetObj} requires element-wise greater than to hold.
In \alglineref{worsePatterns}, $\textbf{worsePatterns}$ is implemented as a First-In-First-Out queue, because recently visited GLPs have more common ELPs with the next GLP, increasing the likelihood of triggering the skip conditions in Corollary~\ref{theormSummary}. 
The capacity size is experimentally set to 50: a larger capacity incurs more computation overhead when a communication pattern cannot be skipped, while a smaller size implies fewer skipped patterns.
In \alglineref{RemoveWorseELP}, the function \textbf{RemoveWorseELP} removes an unvisited ELP from $\textbf{ELP}\_\textbf{Set}$ if adding the ELP to $\textbf{P}_{ite}$ will make $\textbf{P}_{ite}$ perform worse (evaluated by Corollary~\ref{theormSummary}) than at least one partial GLP in $\textbf{worse}\_\textbf{patterns}$. This function significantly speeds up symbolic optimization by allowing the skipping of partial GLPs, eliminating the need to visit and evaluate all partial and complete GLPs that contained the skipped partial GLP.

\begin{algorithm}[ht!]
\SetAlgoLined
\SetKwInOut{a}{b}
\caption{\textbf{GLPSymbOpt}}
\label{alg_opt_skipped}

\KwIn{ DAG $\textbf{G} = (\boldsymbol{\tau}, \textbf{E})$, GLP $\textbf{P}_{ite}=\{ \}$, $\textbf{best\_yet\_obj}=\{Inf,...,Inf\}$, $\textbf{worsePatterns}=\{ \}$, unvisited$\_$Edge $E$=$\textbf{E}(0)$}
\KwOut{$\textbf{best\_yet\_obj}$}
\If{$\textbf{P}_{ite}$ is complete}{
Solve problem~\eqref{eq_def_f_g} and Update $\textbf{best\_yet\_obj}$\\
\Return 
}

$\textbf{ELP$\_$Set} = \textbf{GetAllELP}(E)$\\

\While{$\textbf{ELP$\_$Set} \text{ is not empty}$}
{
    $ELP=\textbf{ELP$\_$Set.}\textbf{Pop\_front()}$\\
    $\textbf{P}_{ite}\textbf{.}\textbf{Insert}(ELP)$ \\
    \If{$\textbf{P}_{ite}$ is feasible 
    } 
    { 
         $\textbf{obj}= \textbf{GetBestPossibleObj}(\textbf{P}_{ite}) $ \label{line: GetBestPossibleObj} \\
        \eIf{$\textbf{obj} \ \boldsymbol{\geq} \textbf{best\_yet\_obj} $ \label{line: compareBestYetObj} } 
        {
            $\textbf{worsePatterns}\textbf{.}\textbf{push\_back}(\textbf{P}_{ite})$ \label{line: worsePatterns} \\
            \If{$\textbf{worsePatterns}\textbf{.}\textbf{size()} > 50$}{           \textbf{worsePatterns}\textbf{.}\textbf{pop$\_$front()}
            }
                
        }
        {
            $\textbf{GLPSymbOpt}(\textbf{G}, \textbf{P}_{ite}, \textbf{best\_yet\_obj}$, $\textbf{worsePatterns}, \textbf{NextUnvisitedEdge}(\textbf{P}_{ite}))$
        }
    
        $\textbf{ELP$\_$Set}\textbf{.}\textbf{RemoveWorseELP}(\textbf{worsePatterns})$  \label{line: RemoveWorseELP}  
    }
    $\textbf{P}_{ite}\textbf{.}\textbf{Erase}(ELP)$ \\
}
\end{algorithm}
\begin{Example}
    In Example~\ref{Example_setup}, the edge iteration order is $E_0 \rightarrow E_2 \rightarrow E_1$ because $E_0$ and $E_2$ contain the source tasks in cause-effect chains. 
    The first GLP to evaluate is $\boldsymbol{GLP}_0=$ $\{\boldsymbol{ELP}_{E_0}(0),$$
    \boldsymbol{ELP}_{E_2}(0),$$
    \boldsymbol{ELP}_{E_1}(0)\}$, which is infeasible; The second GLP to try is $\boldsymbol{GLP}_1$= $\{\boldsymbol{ELP}_{E_0}(0),$ $\boldsymbol{ELP}_{E_2}(0),$$\boldsymbol{ELP}_{E_1}(1)\}$, which is feasible. Therefore, any GLPs smaller than $\boldsymbol{GLP}_1$ will be skipped based on Theorem~\ref{theorem_bound_single_chain}.
    After that, only 3 GLPs have to be evaluated: 
    $\boldsymbol{GLP}_2=$ 
    $\{$\elpmore{0}{1}, \elpmore{2}{0}, \elpmore{1}{0}$\}$, 
    $\boldsymbol{GLP}_3=\{$\elpmore{0}{0}, \elpmore{2}{1}, \elpmore{1}{0}$\}$
    and $\boldsymbol{GLP}_4$$=\{$\elpmore{0}{1}, \elpmore{2}{1}, \elpmore{1}{0}$\}$.
    However, neither is feasible. 
    As a result, the number of complete GLPs to evaluate decreases from 36 in Example~\ref{example_gcp_bf} to only 1, although there are 4 infeasible GLPs to skip.
\end{Example}

\begin{theorem}
 \label{theorem_skip_perf_bound}
 The difference between solutions found by Algorithm~\ref{alg_opt_skipped} and the optimal solution to problem~\ref{obj_overall} is upper-bounded by $\sum_{\mathcal{C} \in \boldsymbol{\mathcal{C}}} B_{\mathcal{C}}$, where $B_{\mathcal{C}}$ is given by Lemma~\ref{theorem_compare_mn_bound}.
\end{theorem}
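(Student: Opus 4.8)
The plan is to pit the complete pattern $\hat{G}$ returned by \algref{alg_opt_skipped} against a globally optimal complete pattern $\boldsymbol{GLP}^*$ and to charge the entire optimality gap to the single lossy pruning step. Writing $\hat{\mathcal{F}}_{\mathcal{C}}(G)$ for the worst-case data age of chain $\mathcal{C}$ under the evaluation of a (partial) pattern $G$, I set $\mathrm{ALG}=\sum_{\mathcal{C}}\hat{\mathcal{F}}_{\mathcal{C}}(\hat{G})$ and $\mathrm{OPT}=\sum_{\mathcal{C}}\hat{\mathcal{F}}_{\mathcal{C}}(\boldsymbol{GLP}^*)$; since $\hat{G}$ is feasible, $\mathrm{OPT}\le\mathrm{ALG}$, so only $\mathrm{ALG}-\mathrm{OPT}\le\sum_{\mathcal{C}}B_{\mathcal{C}}$ needs proof, with $B_{\mathcal{C}}$ as in \lemref{theorem_compare_mn_bound}. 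The one structural fact I would establish up front is monotonicity: $\sum_{\mathcal{C}}\textbf{best\_yet\_obj}_{\mathcal{C}}$ equals the objective of the best complete pattern seen so far and is non-increasing over the run, ending at $\mathrm{ALG}$.

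Next I would branch on the fate of $\boldsymbol{GLP}^*$. If the backtracking reaches and evaluates it as a complete pattern, then $\mathrm{ALG}\le\mathrm{OPT}$ and we are done. Otherwise $\boldsymbol{GLP}^*$ is discarded, and because the edge order and the per-edge ELP order are fixed, its search path is unique, so there is a first prefix $\boldsymbol{pGLP}_i\subset\boldsymbol{GLP}^*$ where pruning occurs. Only two triggers are possible. The first is the exact dominance test on \alglineref{compareBestYetObj}: then $\hat{\mathcal{F}}_{\mathcal{C}}(\boldsymbol{pGLP}_i)\ge\textbf{best\_yet\_obj}_{\mathcal{C}}$ for every $\mathcal{C}$, and \thmref{theorem_partial_gcp} (using the convention that a sub-chain missing its source contributes $0$, so dominance is element-wise across all chains) gives $\hat{\mathcal{F}}_{\mathcal{C}}(\boldsymbol{GLP}^*)\ge\hat{\mathcal{F}}_{\mathcal{C}}(\boldsymbol{pGLP}_i)$; summing and invoking monotonicity yields $\mathrm{OPT}\ge\mathrm{ALG}$, \ie a zero gap.

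The interesting case is the symbolic removal on \alglineref{RemoveWorseELP}. Here \cref{theormSummary} supplies a stored witness $\boldsymbol{pGLP}_j\in\textbf{worsePatterns}$ with $\boldsymbol{pGLP}_i\preccurlyeq\boldsymbol{pGLP}_j$, and its presence in $\textbf{worsePatterns}$ records $\hat{\mathcal{F}}_{\mathcal{C}}(\boldsymbol{pGLP}_j)\ge\textbf{best\_yet\_obj}_{\mathcal{C}}$ (measured when it was inserted). I would then chain three per-chain inequalities: from $\boldsymbol{pGLP}_i\preccurlyeq\boldsymbol{pGLP}_j$, \thmref{theorem_bound_single_chain} lifted to partial patterns as in \thmref{theorem_partial_dart_chain} gives $\hat{\mathcal{F}}_{\mathcal{C}}(\boldsymbol{pGLP}_i)\ge\hat{\mathcal{F}}_{\mathcal{C}}(\boldsymbol{pGLP}_j)-B_{\mathcal{C}}$; from $\boldsymbol{pGLP}_i\subset\boldsymbol{GLP}^*$, \thmref{theorem_partial_gcp} gives $\hat{\mathcal{F}}_{\mathcal{C}}(\boldsymbol{GLP}^*)\ge\hat{\mathcal{F}}_{\mathcal{C}}(\boldsymbol{pGLP}_i)$; and the witness condition closes the loop to $\hat{\mathcal{F}}_{\mathcal{C}}(\boldsymbol{GLP}^*)\ge\textbf{best\_yet\_obj}_{\mathcal{C}}-B_{\mathcal{C}}$. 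Summing over $\mathcal{C}$ and using monotonicity (the best-yet sum at the witness's insertion time is at least the terminal $\mathrm{ALG}$) gives $\mathrm{OPT}\ge\mathrm{ALG}-\sum_{\mathcal{C}}B_{\mathcal{C}}$. That the per-chain arguments may be carried out independently and then added is exactly \thmref{theorem_bound_multi_chain}.

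I expect the main obstacle to be the temporal bookkeeping of the vector $\textbf{best\_yet\_obj}_{\mathcal{C}}$: the witness inequalities are anchored at the moment $\boldsymbol{pGLP}_j$ was stored, whereas the target is stated against the terminal $\mathrm{ALG}$. Reconciling these rests entirely on the non-increasing-sum claim, together with the observation that a pattern dominated at insertion remains a valid witness afterwards because the best-yet objective only falls. A secondary point I would verify explicitly is exhaustiveness of the two triggers, in particular that discarding a prefix through the contain relation (\defref{contain}) can only remove $\boldsymbol{GLP}^*$ after one of the two pruning conditions has already fired on a prefix of its path, and the care that it is the \emph{sum} $\sum_{\mathcal{C}}B_{\mathcal{C}}$, not any single $B_{\mathcal{C}}$, to which the element-wise comparisons accumulate.
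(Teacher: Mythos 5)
Your proposal is correct and takes essentially the same route as the paper's own proof: a case split on whether the globally optimal complete GLP is evaluated or pruned, with the dominance pruning (\alglineref{worsePatterns}) contributing zero loss and the symbolic $\preccurlyeq$-based removal (\alglineref{RemoveWorseELP}) contributing at most $B_{\mathcal{C}}$ per chain via \thmref{theorem_bound_single_chain} and \thmref{theorem_bound_multi_chain}. The paper states this argument much more tersely; your explicit inequality chains, the use of \thmref{theorem_partial_gcp} to lift from pruned prefixes to their completions, and the bookkeeping of the non-increasing best-known objective sum fill in details the paper leaves implicit.
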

\begin{proof}
We prove the theorem under data age optimization. Reaction time optimization can be performed similarly.
    Let's denote the optimal GLP found by Algorithm~\ref{alg_opt_skipped} as $\boldsymbol{GLP}^{Alg\ref{alg_opt_skipped}*}$, the optimal GLP to problem~\ref{obj_overall} as $\boldsymbol{GLP}^{*}$. Furthermore, we divide all the possible complete GLPs into two sets: $\boldsymbol{S}^{eval}$ and $\boldsymbol{S}^{skip}$, which denote the complete GLPs evaluated during Algorithm~\ref{alg_opt_skipped} and those skipped. We know $\boldsymbol{GLP}^{Alg\ref{alg_opt_skipped}*} \in \boldsymbol{S}^{eval}$. If $\boldsymbol{GLP}^{*} \in \boldsymbol{S}^{eval}$, then Algorithm~\ref{alg_opt_skipped} finds the optimal solutions because Algorithm~\ref{alg_opt_skipped} will select the best GLPs within $\boldsymbol{S}^{eval}$.

    If $\boldsymbol{GLP}^{*} \in \boldsymbol{S}^{skip}$, then $\boldsymbol{GLP}^{*}$ is skipped either based on Theorem~\ref{theorem_bound_single_chain} (performance bound is provided by Theorem~\ref{theorem_bound_multi_chain}), or based on \alglineref{worsePatterns} in Algorithm~\ref{alg_opt_skipped} ($\boldsymbol{GLP}^{*}$ cannot achieve better performance than $\boldsymbol{GLP}^{Alg\ref{alg_opt_skipped}*}$). Therefore, the theorem is proved. 
\end{proof}

\subsection{Computation Efficiency}
A simple worst-case complexity analysis when optimizing the data age or time disparity is given as follows:
\begin{equation}
    O(\textbf{E})= {\textstyle \prod_{E \in \boldsymbol{E}}} \ |\textbf{ELP}_{E}|
\end{equation}
where $ \boldsymbol{E}$ denotes the edges that appear in the objective function, $|\cdot|$ denotes the size of a set. 
Reaction time optimization has a similar complexity.
The analysis is pessimistic because it is difficult to analyze the improvements brought by the symbolic operation or the back-tracking algorithms, though they can usually bring significant speed-up. 

\section{Generalizations and Limitations}
\label{section_application}
\subsection{Sporadic Tasks Optimization}

Our optimization algorithms can work with cause-effect chains that contain sporadic tasks, \ie tasks released non-periodically.
With the help of the compositional theorem proposed in~\cite{Gnzel2021TimingAO}, the cause-effect chain can be separated into several sub-chains where some sub-chains do not contain periodic tasks. The sub-chains with only periodic tasks can still be optimized with the algorithm proposed in this paper. However, notice that evaluating and optimizing time disparity for sporadic tasks may not be easily applicable.

\subsection{Other Objective Functions and Schedulability Analysis}
\label{section_more_application_schedulability}

The backtracking algorithm in Section~\ref{section_opt_skip} works with many types of objective functions and their combination. 
If these objective functions can be transformed into linear functions such as DART, then optimality is guaranteed with good runtime speed; 
otherwise, the algorithm may still be applicable (e.g., time disparity jitter), though without the optimality guarantee. 
In the latter case, nonlinear optimization algorithms~\cite{Wang2023RTAS, Wang2024AGA} may also be considered.

\subsection{Pessimistic RTA and Second Step Optimization}
\label{sec: limitation_solution}
It is difficult to directly adopt the exact schedulability analysis~\cite{Tindell1994ADDINGTT, Palencia1998SchedulabilityAF, Redell2002CalculatingEW} when optimizing the virtual offset and virtual deadline because of their discrete and nonlinear forms. Therefore, we can only guarantee optimality with the schedulability analysis method used during optimization.

A potential solution is performing a second optimization step to optimize only the virtual offset based on the exact response time. 
The exact response time only depends on the virtual offset and can usually be easily obtained. Therefore, we can utilize the exact response time to optimize the virtual deadline to improve performance further.
A limitation is that it can only find optimal solutions within the current GCP constraints.
Therefore, in experiments, we utilize the heuristic proposed in Maia~\etal~\cite{Maia2023ReducingEL} for the second optimization step if the objective functions are reaction time or data age; 
we can optimize the virtual deadline following \defref{glp evaluate} for time disparity and jitter optimization.


\section{Experimental Results and Discussion}
\label{experiment_section}
The optimization framework was implemented in C++ and tested on a computing cluster (AMD EPYC 7702 CPU). 
The following baseline methods are roughly ordered from least to most effective based on some experiment performance:
\begin{itemize}[leftmargin=*]
    \item DefLET, the default LET model.
    \item Martinez18, an offset optimization method~\cite{Martinez2018AnalyticalCO} for LET.
    \item Bradatsch16, it shrinks the length of LET interval to the worst-case response time; the offset is $0$~\cite{Bradatsch2016DataAD}.
    \item Implicit, implicit communication following~\cite{Hamann2017CommunicationCD}.
    \item Maia23, it uses the smallest relative start time and biggest relative finish time of each task as the LET interval~\cite{Maia2023ReducingEL}. 
    We did not compare their JLD optimization algorithm because it changes the scheduling algorithm.
    \item fLET$\_$GCP\_Enum, from Section \ref{section_opt_bf}.
    \item fLET$\_$GCP\_Backtracking\_LP, from Section~\ref{section_opt_skip}. 
    \item fLET$\_$GCP\_SymbOpt, Algorithm~\ref{alg_opt_skipped}.
    \item fLET$\_$GCP\_Extra, from \secref{limitation_solution}.
\end{itemize}
All the involved LP problems are solved by CPLEX~\cite{cplex2009v12}. The full experiments can be reproduced following the repository: \href{https://github.com/zephyr06/LET_OPT}{$\text{https:github.com/zephyr06/LET\_OPT}$}.

\emph{DBP is not included in our comparison, as it is proven to perform worse than implicit communication theoretically~\cite{Tang2023ComparingCP}.}

In experiments, the task set is scheduled by the rate-monotonic algorithm. 
A safe response time analysis (RTA) is used to obtain the response time $R_i$ for task $\tau_i$~\cite{Joseph1986FindingRT}:
\begin{equation}
    R_i = C_i +\sum_{j \in \text{hp}(i)} \ceil{\frac{R_i}{T_j}}{C_j}
    \label{rta_LL}
\end{equation}
where $\text{hp}(i)$ denotes the tasks with higher priority than $\tau_i$. 

The time limit for optimizing one task set is 1000 seconds. If not time out, Martinez18 finds optimal offset assignments; fLET\_GCP optimization algorithms find the optimal virtual offset and virtual deadline with respect to the RTA~\eqref{rta_LL}.

To maximize the chance of finding a feasible solution within a limited time, the GCP optimization algorithms first evaluate the GCP of the default LET; after that, they follow the searching order mentioned in Section~\ref{section_dart_symbopt}.

\begin{figure*}
     \centering
        \includegraphics[width=0.95\linewidth]{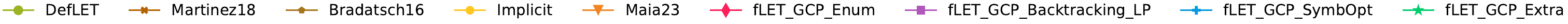}
     \vspace{-0.2cm} 
\end{figure*}
\begin{figure*}[ht!]
\centering 
 \begin{subfigure}[t]{0.32\textwidth} 
    \centering
    \includegraphics[width=\textwidth]{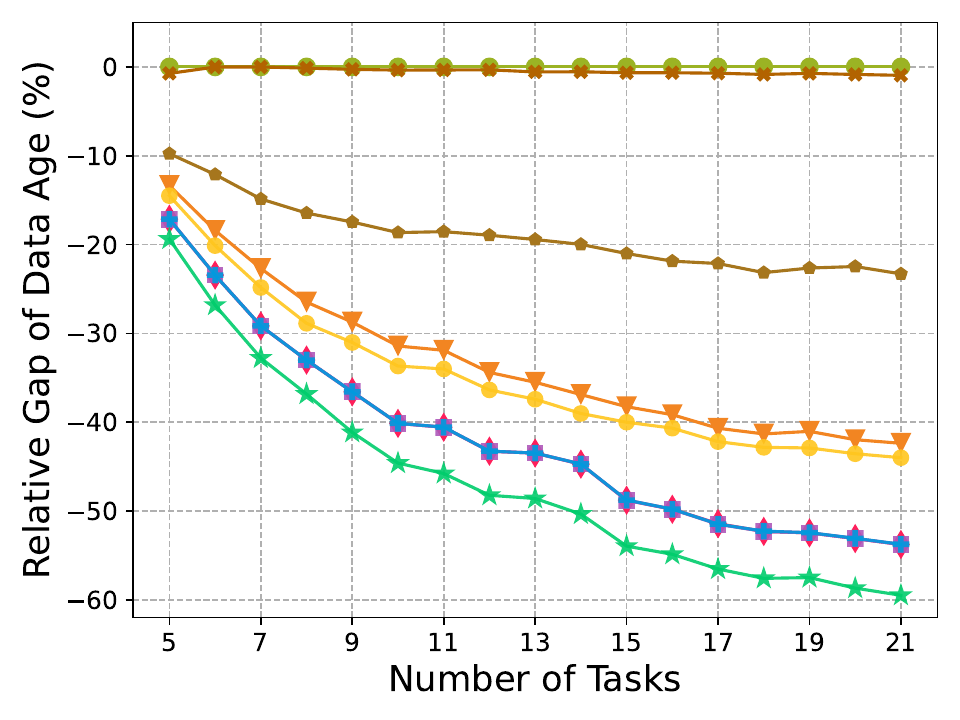}
        \caption{Data Age, Single Chain, High Utilization}
    \label{fig_da_perf}
\end{subfigure}
\begin{subfigure}[t]{0.32\textwidth}
    \centering
    \includegraphics[width=\textwidth]{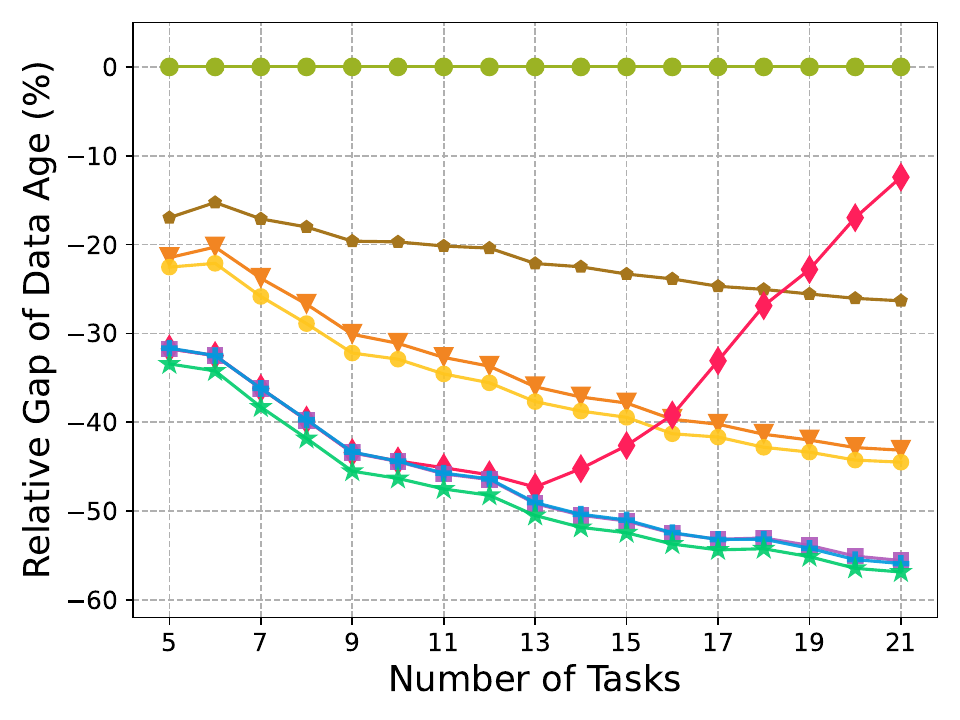}
    \caption{Data Age, Multiple Chains}
    \label{fig_da_perf_3chains}
\end{subfigure}
\begin{subfigure}[t]{0.32\textwidth} 
    \centering
    \includegraphics[width=\textwidth]{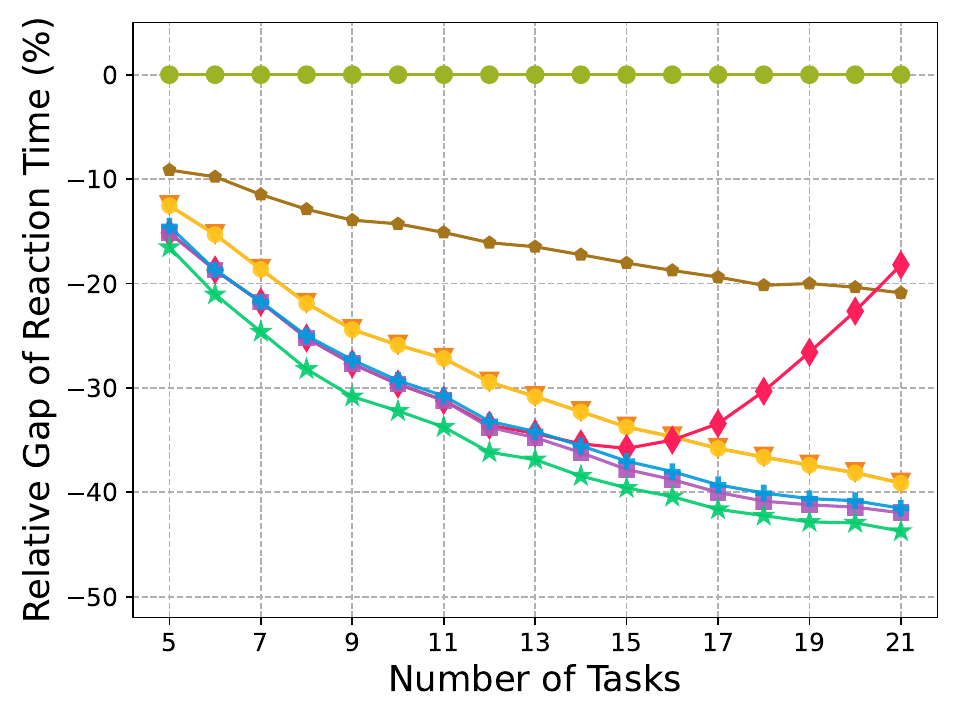}
        \caption{Reaction Time, High Utilization}
    \label{fig_rt_perf}
\end{subfigure}

\begin{subfigure}[c]{0.32\textwidth}
    \centering
    \includegraphics[width=\textwidth]{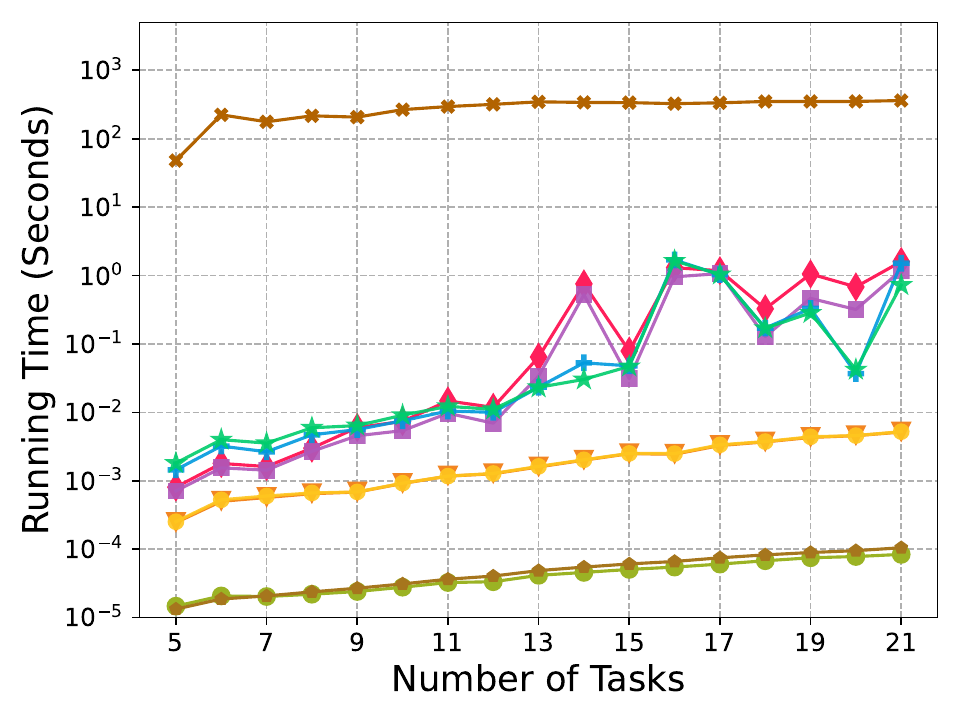}
          \caption{Data Age, Single Chain, High Utilization}
    \label{fig_da_time}
\end{subfigure}
\begin{subfigure}[c]{0.32\textwidth}
    \centering
    \includegraphics[width=\textwidth]{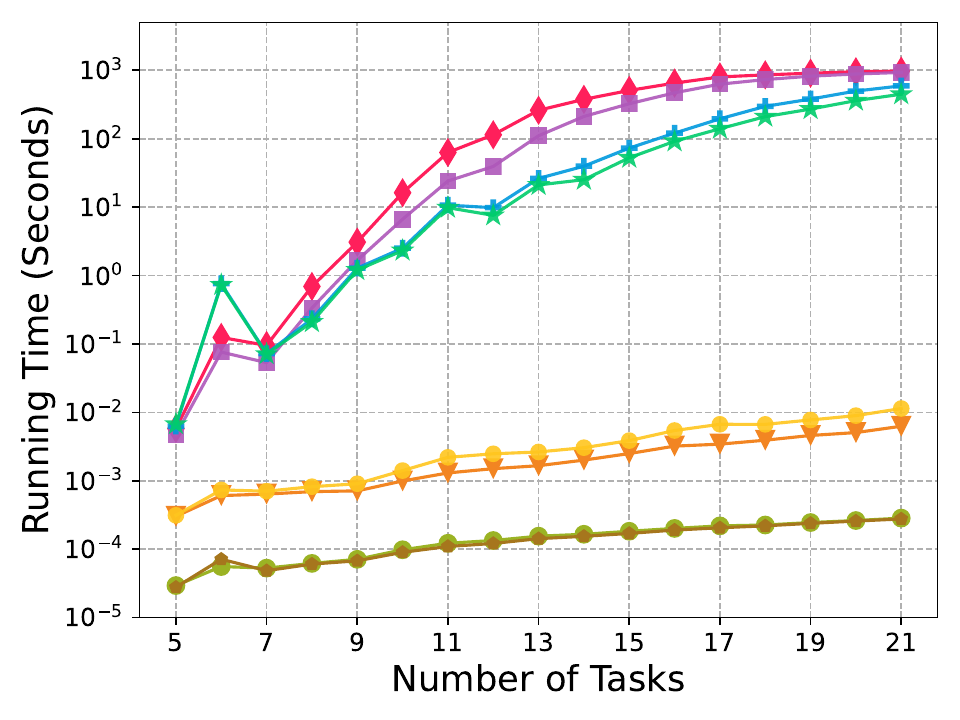}
    \caption{Data Age, Multiple Chains}
    \label{fig_da_time_3chains}
\end{subfigure}
\begin{subfigure}[c]{0.32\textwidth}
    \centering
    \includegraphics[width=\textwidth]{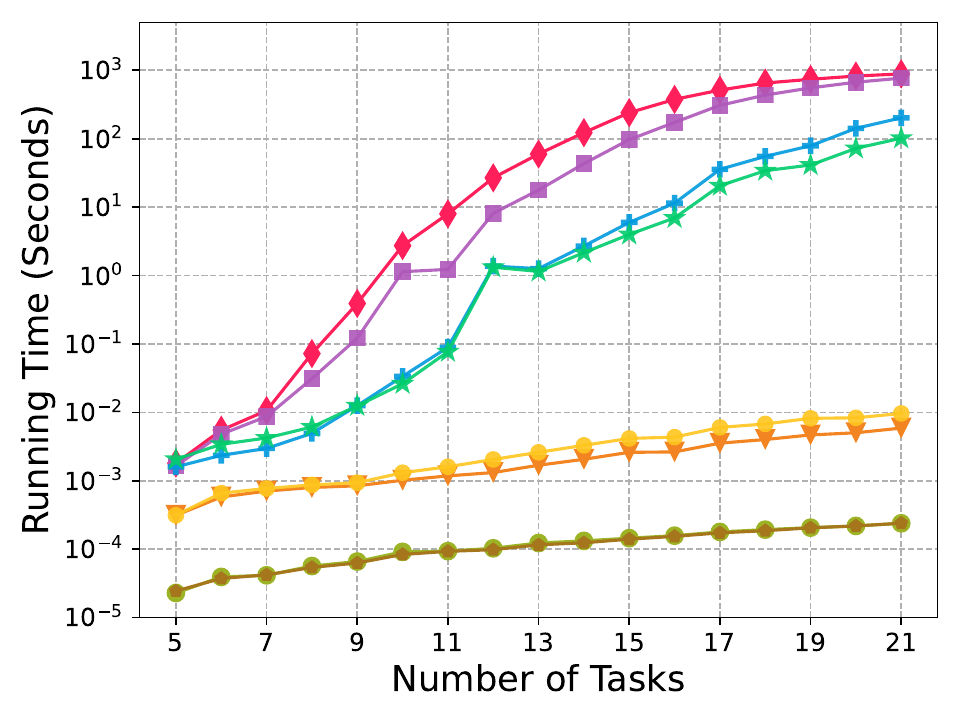}
        \caption{Reaction Time, High Utilization}
    \label{fig_rt_time}
\end{subfigure}

\begin{subfigure}[c]{0.32\textwidth}
    \centering
    \includegraphics[width=\textwidth]{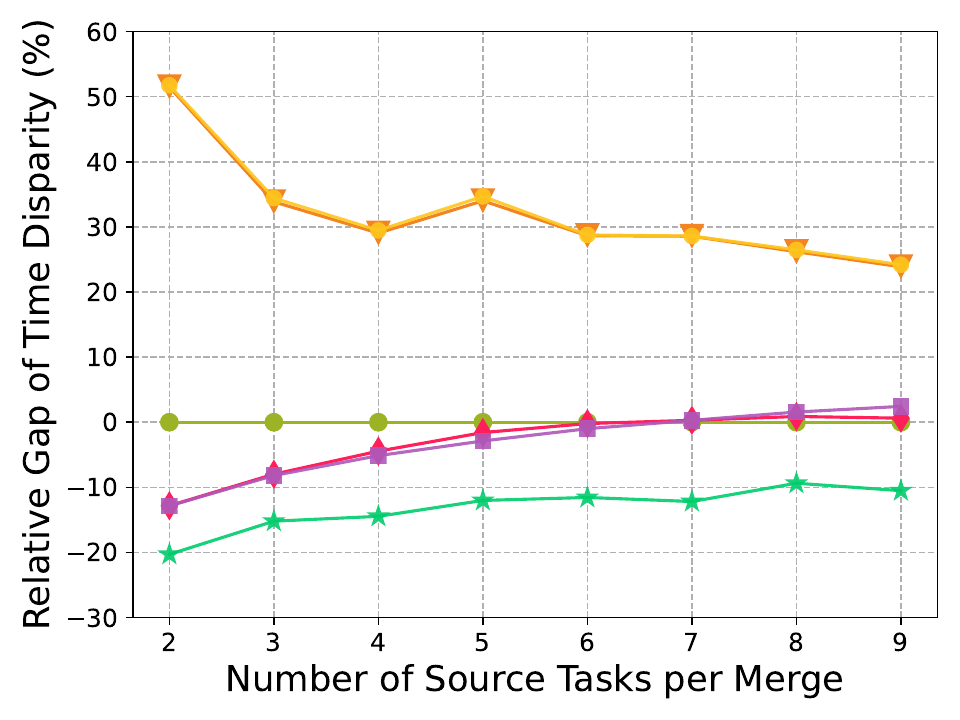}
          \caption{Time Disparity and Jitter}
    \label{fig_sf_obj}
\end{subfigure}
\begin{subfigure}[c]{0.32\textwidth}
    \centering
    \includegraphics[width=\textwidth]{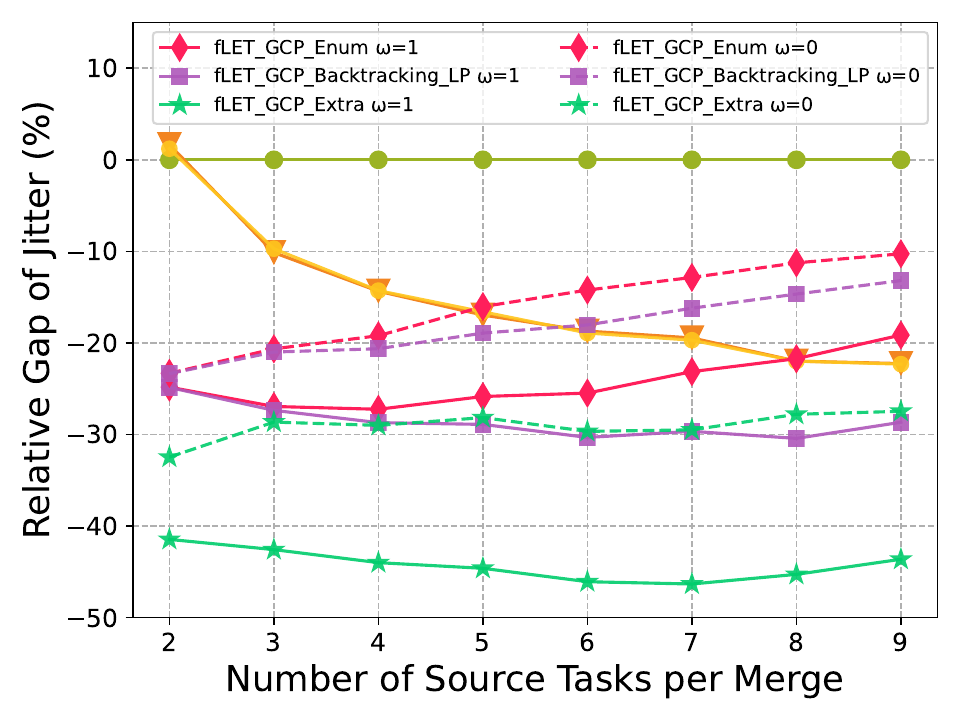}
    \caption{Time Disparity and Jitter}
    \label{fig_sf_jitter}
\end{subfigure}
\begin{subfigure}[c]{0.32\textwidth}
    \centering
    \includegraphics[width=\textwidth]{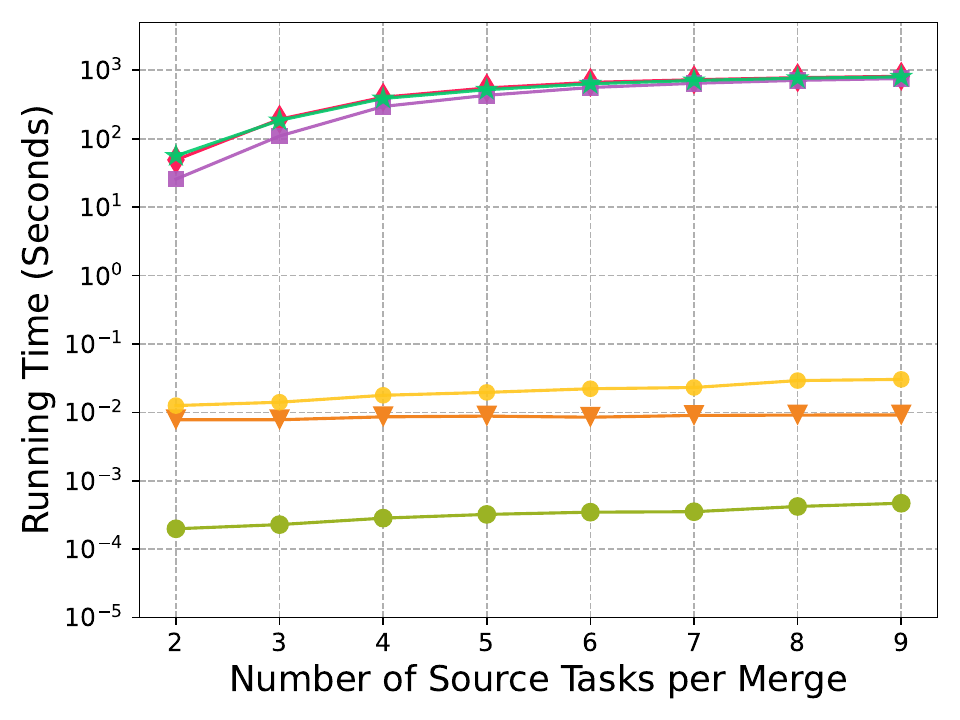}
    \caption{Time Disparity and Jitter}
    \label{fig_sf_time}
\end{subfigure}
\caption{Relative performance gap and runtime (log scale) when optimizing different performance metrics individually. fLET outperforms other communication protocols and state-of-the-art optimization algorithms while offering broader applicability. 
 } 
\label{figs_exp_results_all}
\end{figure*}

\subsection{Task Set Generation}
The DAG task sets are generated following the WATERS Industry Challenge~\cite{Kramer15benchmark}. Task periods are randomly generated from a predefined set $\{1, 2, 5, 10, 20, 50, 100, 200, 1000\}$ whose relative probability distribution $\{$3, 2, 2, 25, 25, 3, 20, 1, 4$\}$~\cite{Kramer15benchmark}. 
The task set utilization is randomly selected from $[0.5, 0.9] \times m$, where $m$ is the number of cores ($m=4$ in our case). 
In high-utilization task sets, the utilization is always $0.9m$.
Each task $\tau_i$'s execution time $C_i$ is generated by Uunifast~\cite{Bini2005MeasuringTP} except that each task's utilization is no larger than 100\%. The deadline $D_i^{org}$ is set as the period $T_i$.  

The DAG structure is generated following He \etal~\cite{He2021ResponseTB}. 
Random edges are added from one task to another with 0.9 probabilities (task sets with many cause-effect chains cannot be generated with smaller probabilities).  
After generating the DAG, we randomly select source and sink nodes and use the shortest path algorithm by Boost Graph Library~\cite{Siek2001TheBG} to generate random cause-effect chains. The length and the activation pattern distributions of the cause-effect chains follow Table~\RomanNumeralCaps{6} and Table~\RomanNumeralCaps{7} in Kramer~\etal~\cite{Kramer15benchmark}. 
To generate many random task sets while simultaneously satisfying the two distribution requirements, we first generate many random task sets with random cause-effect chains, calculate the likelihood for each task set, and then sample 1000 random task sets weighted by the likelihood for each $N$. The task sets used in our experiment follow a similar distribution pattern as in Kramer~\etal~\cite{Kramer15benchmark}. 
 
We generate 1000 random task sets for a given number of tasks within a task set. 
Each task set $\boldsymbol{\tau}$ of $N$ tasks has $1.5N$ to $3N$ random cause-effect chains. 
For example, there are 31 to 63 cause-effect chains when there are 21 tasks in the task set (following Hamann~\etal~\cite{waters2017}).
All the generated task sets are schedulable based on RM.

There are always 21 tasks in each task set when optimizing the time disparity and jitter. However, the maximum number of source tasks varies from $2$ to $9$ following ROS~\cite{Li2022WorstCaseTD}. 
Besides, we randomly select $1$ to $4$ merges to constitute the objective function~\eqref{obj_overall_sf}. The weight in the objective function~\eqref{obj_overall_sf} is $1$.

\begin{table*}[!ht]
\centering
\caption{Autonomous robot case study results}
\label{tableCaseStudyResult}
\begin{tabular}{@{}lllllllll@{}}
\toprule
                   & DefLET & Martinez18 & Bradatsch16 & Implicit & Maia23 & fLET\_Enum    & fLET$\_$Backtracking & fLET\_SymbOpt \\ \midrule
Reaction Time (ms) & 4040   & N/A        & 3237        & 3237     & 3237   & \textbf{2725} & \textbf{2725}        & \textbf{2725} \\
Data Age (ms)      & 5000   & 5000       & 4197        & 4197     & 4197   & \textbf{3685} & \textbf{3685}        & \textbf{3685} \\
Sensor Fusion, Jitter (ms) & 1500, 1500 & N/A & N/A & 1712, 1500 & N/A & \textbf{1461, 1422} & \textbf{1461, 1422} & \textbf{1461, 1422} \\ \bottomrule
\end{tabular}%
\end{table*}

Since finding optimal time disparity jitter is computationally expensive (Observation~\ref{lemma_td_jitter_linearizable}), we optimize only time disparity when evaluating a GLP (\defref{glp evaluate}) for better efficiency in experiments. We then select the GLP with minimum time disparity and weighted jitter as the final solution. Therefore, optimality for the objective function~\eqref{obj_overall_sf} is not guaranteed, though it is possible by solving mixed-integer programming.

Fig.~\ref{figs_exp_results_all} shows the runtime speed and the performance gap of a method against the default LET, which is defined as 
\begin{equation}
    \frac{\mathcal{F}_{method} - \mathcal{F}_{defLET}}{\mathcal{F}_{defLET}}\times 100 \%
\end{equation}
Reaction time and data age optimization results are very similar, so we only show one of them under the same situation.

\subsection{Autonomous Robot Case Study}

We performed a realistic case study following the autonomous robot system built in Sifat \etal~\cite{Sifat2023ASM}. The computation tasks and their dependency relationship are shown in Fig.~\ref{fig_dag_case_study}. The tasks are executed in a real embedded system (NVIDIA Jetson AGX Xavier), and their WCET and period are shown in Table.~\ref{tableRobotComputingTask}. The tasks adopt implicit deadlines. The end-to-end latency is measured for the critical cause-effect chain: SLAM $\rightarrow$ Path Planning $\rightarrow$ Control. The original paper~\cite{Sifat2023ASM} does not consider time disparity, so we selected a reasonable merge (Depth Estimation and Path Planning to Control) to conduct experiments on time disparity and jitter. The results are shown in Table.~\ref{tableCaseStudyResult}.

\begin{figure}[!ht]
\centering
\includegraphics[width=1.0\columnwidth]{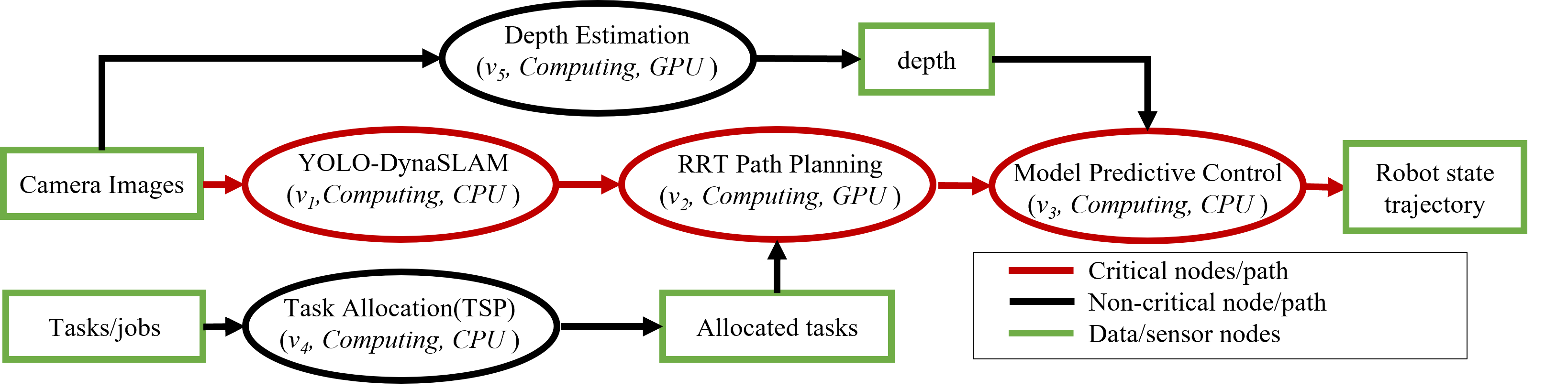}
\caption{
Autonomous robot tasks' dependency graph, \cite{Sifat2023ASM}.
}
\label{fig_dag_case_study}
\end{figure}

\begin{table}[ht!]
\centering
\caption{Autonomous robot computing tasks}
\label{tableRobotComputingTask}
\begin{tabular}{@{}lll@{}}
\toprule
Task             & Period (ms) & WCET (ms) \\ \midrule
SLAM             & 1000        & 500       \\
Path Planning    & 2000        & 1188      \\
Control          & 40          & 37        \\
Task Allocation  & 10000       & 10000     \\
Depth Estimation & 500         & 400       \\ \bottomrule
\end{tabular}%
\end{table}

\subsection{Performance Analysis}

\subsubsection{fLET vs Other Communication Mechanisms}
The performance improvements and the extra time determinism benefits make fLET more competitive than other communication mechanisms, such as default LET and implicit communication, especially when the end-to-end latency metrics are important. Furthermore, fLET maintains the extra time determinism benefits that implicit communication does not have.

\subsubsection{fLET Optimization Algorithms} 
In experiments, the suboptimality gap in Theorem~\ref{theorem_skip_perf_bound} is typically under 1\%. Meanwhile, the symbolic operation brings 2x to 30x speed-up compared with the backtracking algorithm. The speed advantages are crucial for large task sets or situations with tight time budgets because the algorithm performance may otherwise seriously degrade (\eg  Martinez18 and fLET$\_$GCP\_Enum). 

\subsubsection{Multi-objective optimization}
In Fig.~\ref{fig_sf_obj}, the time disparity results nearly overlap between $w=0$ and $w=1$, so we only show the results for $w=1$.
Fig.~\ref{fig_sf_obj} and~\ref{fig_sf_jitter} show that the fLET optimization algorithms effectively balance conflicting goals (time disparity and jitter). 
Although they cannot guarantee optimality when jitter is present in the objective functions, fLET still significantly outperforms the baseline methods.
It is also interesting to note that optimizing the data age often inherently optimizes the reaction time and vice versa. This observation is consistent with the recent work~\cite{Gnzel2023OnTE}, which proves that the maximum reaction time is equivalent to the maximum data age (note that they use a slightly different definition than those introduced in Section~\ref{sec: system_model}).

\subsubsection{Pessimistic vs exact RTA} 
Performing extra optimization with the exact RTA brings big performance improvements when optimizing time disparity and jitter (TDJ). In contrast, the improvements when optimizing DART are more limited. This is probably because TDJ is more ``nonlinear'' than DART (Theorem~\ref{lemma_td_jitter_linearizable}) and more challenging to optimize.

\subsubsection{Time-out issue}
Despite the possibility for timeouts as systems scale, Fig.~\ref{figs_exp_results_all} shows that substantial performance improvements are still attainable even if the fLET optimization algorithms did not finish within the time limits.

\section{Conclusions and Future Work}
\label{conclusion_section}
In this paper, we propose novel optimization algorithms to optimize both the reading and writing time of the LET model, improving the system performance with dataflow determinism preserved. 
In particular, the optimization problem is reformulated to optimize communication patterns, which can be solved efficiently with LP while supporting the optimization of many metrics, such as end-to-end latency, time disparity, and its jitter.
We also introduce novel symbolic operations and prove bounded suboptimality when minimizing the worst-case data age or reaction time.
Experimental results highlight significant performance improvement over other communication mechanisms (implicit communication and DBP) and state-of-the-art LET extensions.

\newpage
\section{ACKNOWLEDGMENT}
This work is partially supported by NSF Grants No. 1812963 and 1932074. We also sincerely appreciate the anonymous reviewers and the shepherd whose insightful feedback and guidance significantly contributed to the refinement and enhancement of this work.

\bibliographystyle{ieeetr}
\bibliography{scheduling} 

\end{document}